\newcommand{\guar}{guar}
\newcommand{\aar}{aar}
\newcommand\eat[1]{}
\newlength{\wordlength}
\newcommand{\midd}{\mathbin{:}}
\newcommand{\eqclass}[2][]{\ifthenelse{\equal{#1}{}}{[#2]}{[#2]_{\sim_{#1}}}}
\newcommand{\Pref}[1][]{
	\ifthenelse{\equal{#1}{}}{\mathrel \succsim}{\mathop{\succsim_{#1}}}
}                                          
\newcommand{\sPref}[1][]{                  
	\ifthenelse{\equal{#1}{}}{\mathrel P}{\mathop{P_{#1}}}
}                                          
\newcommand{\Indiff}[1][]{                 
	\ifthenelse{\equal{#1}{}}{\mathrel I}{\mathop{I_{#1}}}
}
\newcommand{\prefset}[1][]{\ifthenelse{\equal{#1}{}}{\mathcal{R}}{\mathcal{R}_{#1}}}
\definecolor{kentuckyblue}{RGB}{0, 93, 170}			
\definecolor{green}{RGB}{0, 102, 0}					
\definecolor{frenchred}{RGB}{250,60,50}				
\definecolor{unswyellow}{RGB}{255,155,0}			
\title{Egalitarianism of Random Assignment Mechanisms
}
\author{%
	Haris Aziz\inst{1,2} \and Jiashu Chen\inst{2} \and Aris Filos-Ratsikas\inst{3} \and \\ Simon Mackenzie\inst{1,2} \and Nicholas Mattei\inst{1,2}}
\institute{%
NICTA, Sydney, Australia\\
\email{\{haris.aziz,simon.mackenzie,nicholas.mattei\}@nicta.com.au}
\and
UNSW, Sydney, Australia\\
\email{jiashu.chen@student.unsw.edu.au }
\and 
University of Oxford, Oxford, UK and Aarhus University, Aarhus, Denmark\\
\email{filosra@cs.au.dk}
	}
\begin{document}

\maketitle

\begin{abstract}
We consider the egalitarian welfare aspects of random assignment mechanisms when agents have unrestricted cardinal utilities over the objects. We give bounds on how well different random assignment mechanisms approximate the optimal egalitarian value and investigate the effect that different well-known properties like ordinality, envy-freeness, and truthfulness have on the achievable egalitarian value. Finally, we conduct detailed experiments analyzing the tradeoffs between efficiency with envy-freeness or truthfulness using two prominent random assignment mechanisms ---  random serial dictatorship and the probabilistic serial mechanism --- for different classes of utility functions and distributions. 
\end{abstract}

\vspace{-1em}
\section{Introduction}

We explore the tradeoffs between fairness and efficiency for randomized mechanisms for the \emph{assignment problem}. Specifically, we consider settings where $n$ agents express preferences (cardinal or ordinal) over a set of $m$ indivisible objects.  The objective is to assign the objects to agents in a fair and mutually beneficial manner~\citep[see \eg][]{ABS13a,AGMW14a,BoMo01a,HyZe79a}.  
This general setting has a number of important and significant applications including the assignment of tasks to cores in cloud computing, kidneys to patients in organ exchanges, runways to airplanes in transportation, and students to seats in schools.  

While it is sometimes assumed that the agents' preferences over the objects can be expressed fully through \emph{ordinal rankings}, most of the classical literature on the assignment problem \citep{BoMo01a,HyZe79a,Zhou90a} assumes the existence of an underlying utility structure, where each agent assigns real values or \emph{cardinal valuations} to the different objects; these are von Neumann-Morgenstern utilities that specify the intensity of the agents' preferences. 
Some classical papers (see e.g. \cite{BoMo01a}) focus on \emph{ordinal mechanisms}, i.e. mechanisms that operate solely on the rankings consistent with the cardinal valuations, but \emph{cardinal mechanisms}, where agents explicitly report their numerical values,  have also been considered in literature~\citep{HyZe79a,Vazi07a,Zhou90a}. 

A well-established criterion for fairness is the Rawlsian concept of maximizing the happiness of the least satisfied agent~\citep{Rawl71a}. Following the spirit of this idea, we quantify the fairness of an allocation in terms of its \emph{egalitarian value}: the minimum ratio of the value of objects assigned to an agent to his total valuation for all the objects. The \emph{optimal egalitarian value} for a valuation profile of all agents is the best egalitarian value achievable over all assignments. The optimal egalitarian value is equivalent to the maximum egalitarian welfare if each agent has a total utility of one for the set of all objects. 
The advantage of considering the optimal egalitarian value is that it does not change if agents scale their relative values for the objects. 

The egalitarian value is not the only criterion for desirable allocation mechanisms. Allocation mechanisms may have other goals and requirements such as envy-freeness or truthfulness. Crucially, both these properties are incompatible with optimizing the egalitarian value except in very restricted domains~\cite{BoMo04a}.
Thus, it is natural to examine the \emph{tradeoffs} between optimizing the egalitarian value and achieving other desirable properties. 
In some settings, such as kidney exchanges, the tradeoff between fairness and efficiency is of the utmost concern \cite{DPS14a}.
 
 Evaluating these tradeoffs also motivates the study of how established mechanisms with other desiderata perform in terms of the egalitarian value. For a given mechanism $J$, we examine the approximation ratio $\guar(J)$, which is the minimum ratio (among all valuation profiles) of the egalitarian value of an allocation returned by the mechanism to the optimal egalitarian value. Our work falls under the umbrella of \emph{approximate mechanism design without money}, a framework set by \citet{PrTe13a} for the study of how well mechanisms with certain properties approximate some objective function of the agents' inputs.

In this paper, we study randomized assignment mechanisms for which achieving ex ante fairness is easier compared to deterministic mechanisms. Thus, to evaluate the performance of the mechanisms, we compare their egalitarian value with the optimal egalitarian value achieved by any randomized allocation. 
Note that computing the allocation with the optimal egalitarian value is an NP-hard problem when we restrict ourselves to deterministic allocations~\citep{DeHi88a}. On the other hand, when we consider randomized allocations, the optimal egalitarian value can be computed in polynomial time via a linear program. 

We give extra consideration to two randomized assignment mechanisms --- \emph{random serial dictatorship (RSD)} and \emph{probabilistic serial (PS)}, which are probably the best-known and most-studied mechanisms in the random assignment literature.
In RSD,\footnote{The original definition of RSD is for $n$ agents and $n$ objects; the definition here is a straightforward adaptation for $n <m$.} a permutation over the agents is selected uniformly at random and each agent in the permutation picks the most preferred $m/n$ units of object that are not yet allocated~\citep{ABB13b,BoMo01a,Sven99a}. 
In PS, each object is considered to have an infinitely divisible probability weight of one.  To compute an allocation, agents simultaneously and with the same speed eat the probability weight of their most preferred object which has not been completely consumed. Once an object has been completely eaten by a subset of agents, each of these agents moves on to eat their next most preferred object that has not been completely eaten. The procedure terminates after all the objects have been eaten. The random allocation of an agent by PS is the amount of each object he has eaten~\citep[see \eg][]{BoMo01a,Koji09a}. PS satisfies stochastic dominance (SD) envy-freeness (envy-freeness with respect to all cardinal utilities consistent with the ordinal preferences).
We also formalize a mechanism which we refer to as \emph{Optimal Egalitarian and Envy-Free Mechanism (OEEF)}, which maximizes the egalitarian value of an allocation under the constraint that the allocation is envy-free.  Allocations under this mechanism can be computed in polynomial time via linear programming.




\subsection{Our contributions}
We present novel theoretical and empirical results regarding fairness in randomized mechanisms.  Our main theoretical contributions are as follows.

\begin{itemize}
		\item For \emph{any} SD envy-free mechanism $J$: $\guar(J) = O(n^{-1})$.
		\item For \emph{any} envy-free mechanism $J$: $\guar(J) = \Omega(n^{-1})$ and $ \guar(J) = O(n^{-1/5})$.
				\item For \emph{any} truthful-in-expectation mechanism $J$: $\guar(J) = O(n^{-1/5})$.
		\item For \emph{any} ordinal mechanism $J$: $\guar(J) = O(n^{-1})$.
\end{itemize}
As a result of our general bounds, we also get asymptotically tight bounds of $\Theta(n^{-1})$ for RSD and PS.
As a result of our general bounds for envy-free mechanisms, we obtain bounds for well-known envy-free mechanisms such as \emph{competitive equilibrium with equal incomes (CEEI)}~\citep{Vazi07a} and the \emph{pseudo-market} mechanism~\citep{HyZe79a}. Since a random assignment of indivisible objects can also be interpreted as a fractional assignment of divisible objects, \emph{our results apply as well to fair allocation of divisible objects}. 

The constructions that provide the upper bounds for the $\guar$ values can be considered as extreme examples that may not be common in real-life scenarios. In order to better understand how the mechanisms may perform in practice, we consider the approximation ratio achieved by RSD and PS. 
We also examine the effect of imposing the envy-freeness constraint. 
We generate ordinal profiles via a Mallow's model for different levels of dispersion $\phi$ from a common reference ranking of objects, assigning cardinal utilities via the Borda and exponential scoring functions.  Sweeping $\phi$ from 0, where all agents have the same preference, to 1.0, where all preference orders are equally likely (the Impartial Culture), allows us to make statements regarding situations where agent preferences are more or less correlated.  We make the following observations.
\begin{itemize}
\item There is a negligible difference between the minimum and average achievable approximation ratios for PS and RSD under Borda utilities.  While PS preforms slightly better than RSD when agents have more extreme (exponential) utilities, both mechanisms preform strictly worse when agents' valuations are more similar, as they are under Borda utilities. 
\item When we require envy-freeness (as in OEEF) with exponential utilities, as $\phi$ increases towards 1.0 (i.e. Impartial Culture) the achievable approximation ratio first decreases slightly and then increases. Hence, as agents value more disparate objects highly, satisfying envy-freeness does not impose as stiff a penalty on the achievable approximation ratio.
\item In our experiments, the requirement of envy-freeness as a constraint in itself (as in the OEEF mechanism)  does not have a large impact on the OEV. However, since PS returns an SD envy-free (envy-free for all cardinal utilities consistent with the ordinal preferences) allocation, its achievable approximation ratio is strictly less than OEEF. 
\end{itemize}
\vspace{-1em}
\subsection{Related work}

The assignment problem has been in the center of attention in recent years in both computer science and economics
~\citep{Aziz14d,CKKK12a,GuCo10a,HSTZ11a,GZH+11a,GuNi12a,PPS12a}. 
Often, in the classical assignment literature, agents are assumed to have an underlying cardinal utility preference structure, even if they are not asked to report it explicitly. On the other hand, there are many examples of well-known cardinal mechanisms, such the pseudo-market (PM) mechanism of \citet{HyZe79a} and the competitive equilibrium with equal incomes (CEEI) mechanism~\citep{Vazi07a}. Both mechanisms return allocations that are envy-free in expectation. The two prominent ordinal mechanisms in the literature are the probabilistic serial mechanism (PS) ~\citep{BoMo01a,ChKo10a,UKK+13a} and random serial dictator (RSD), a folklore mechanism that pre-existed the formulation of the assignment problem in \cite{HyZe79a}. Later, \citet{ChKo10a} proposed a variant of PS called \emph{multi-unit eating probabilistic serial (MPS)} that was formalised and axiomatically studied by \citet{Aziz14d}. 

The egalitarian welfare has received considerable interest within the computer science literature, especially for allocation of discrete objects in a deterministic manner. The problem is also referred to as the \emph{Santa Claus problem} in which the goal is to compute an assignment which maximizes the utility of the agent that gets the least utility~\citep[see \eg][]{AsSa10a,BeDa05a,FGM14a,NRR13a}.
For deterministic settings, \citet{DeHi88a} proved that the problem is NP-hard. On the other hand, for randomized/fractional allocations, the problem can be solved via a linear program.\footnote{Even a lexicographic refinement of the OEV maximizing allocations (in which the value of the worst off agent, then the second worst off agent, and so on, are maximized) can be computed in polynomial time via a series of linear programs~\citep{GKK10a}.}
Recently, another fairness constraint that has been considered is the maxmin fair share~\citep{BoLe14a,PrWa14a}. The notion coincides with proportionality in the context of randomized/fractional allocations and hence is weaker than OEV. 


%

%
%


Another popular objective is the maximization of the \emph{utilitarian welfare}, i.e. the sum of agents' valuations for an assignment.  \citet{RKFZ14a} proved that RSD guarantees $\Omega(n^{-1/2})$ of the total utilitarian welfare if the utilities are normalized to sum up to one for each agent, which is asymptotically optimal among all randomized truthful mechanisms.
In a recent paper, \citet{CFK+15a} proved similar results for the price of anarchy with respect to the utilitarian welfare of random assignment mechanisms, including RSD and PS. In this paper, we consider the effect on approximations of the egalitarian value from strategic aspects (truthful mechanisms), limited information (ordinal mechanisms), or additional fairness requirements (envy-free mechanisms). The egalitarian value does not require the agents' utilities to be normalized and does not require agents' utilities to be added. 
  
\citet{BCK11a} determined the approximation ratio of RSD and PS when the objective is the maximization of a different notion, the ordinal social welfare, which is related to the ``popularity'' of an assignment~\citep{ABS13a,KMN11a}.
\citet{CKKK12a} examined the issue of how much efficiency loss fairness requirements like envy-freeness incur but crucially, their objective is maximization of utilitarian welfare.  

	\section{Preliminaries}

	An assignment problem is a triple $(N,O,v)$ such that $N=\{1,\ldots, n\}$ is the set of agents, $O=\{o_1,\ldots, o_m\}$ is the set of indivisible objects, and $v=(v_1,\ldots, v_n)$ is the valuation profile which specifies for each agent $i\in N$ utility or valuation function $v_{i}$ where $v_{ij}$ or $v_i(o_j)$ denotes the value of agent $i$ for object $o_j$. 
We will denote by $V^n$ the set of all possible valuation profiles.

	A fractional or random allocation $p$ is a $(n\times m)$ matrix $[p(i)(j)]$ such that $p(i)(j) \in [0,1]$ for all $i\in N$, and $o_j\in O$. We denote by $\mathcal{P}$ the set of all feasible allocations.
	The term $p(i)(j)$ which we will also write as $p(i)(o_j) $ represents the probability of object $o_j$ being allocated to  agent $i$. Each row $p(i) =(p(i)(1) ,\ldots, p(i)(m) )$ represents the \emph{allocation of agent $i$}. 
	The set of columns correspond to the objects $o_1,\ldots, o_m$.
	We will denote by $\hat{p}$ the $m$ dimensional vector where the $j$-th entry is $\sum_{i\in N}p(i)(j) $ is the total probability that object $j$ will be allocated to some agent.\footnote{We assume that objects can also be left unallocated with some probability.}
	The utility of agent $i$ from allocation $p$ is $u_i(p(i) )=\sum_{j\in O}(p(i)(j))v_{ij}.$ 
	An allocation $p$ is \emph{proportional} if for all $i\in N$,  $u_i(p(i))\geq \frac{1}{n}u_i(O)$.
	An allocation $p$ is \emph{envy-free} if for all $i,j\in N$,  $u_i(p(i))\geq u_i(p(j))$. An allocation is \emph{SD envy-free} if it is envy-free with respect to all cardinal utilities consistent with the ordinal preferences.\footnote{SD envy-freeness also applies to cardinal mechanisms e.g., one that maximize total welfare subject to SD envy-freeness constraints.}
	
		We will consider randomized mechanisms that return a random allocation 
		for each instance of an assignment problem. Note the connection between random assignments for indivisible objects and fractional assignments of divisible objects; a random assignment can be viewed as a fractional assignment when agents have additive utilities over the objects. In that sense, we can use well-known mechanisms for fractional assignments, like the CEEI mechanism, as randomized mechanisms for our setting.


We say that a mechanism is proportional if it always returns a proportional allocation. Similarly, a mechanism is envy-free if it always returns an envy-free allocation. A mechanism $M$ is \emph{truthful-in-expectation}, if for any agent $i \in N$, any valuation profile $v=(v_i,v_{-i})$ and any misreport $v_i'$ of agent $i$ it holds that $u_i(M(v_i,v_{-i})) \geq u_i(M(v_i',v_{-i}))$, i.e. no agent has any incentive to misreport her true valuation.

Two valuations $v_i$ and $v_i'$ are \emph{ordinally equivalent} if they induce the same ranking over objects, formally $v_{i}(o_j)\geq v_{i}(o_k)$ iff	$v_{i}'(o_j)\geq v_{i}'(o_k).$  A profile $v$ is ordinally equivalent to profile $v'$ if for each $i\in N$, $v_i$ and $v_i'$ are ordinally equivalent.  
A mechanism $J$ is \emph{ordinal} if for any two preference profiles $v$ and $v'$ that are ordinally equivalent, $J(v)=J(v')$, i.e., the allocations are the same for any pair of ordinally equivalent profiles. 
We now define the main efficiency measures that we will examine in the paper. 
\begin{itemize}
	\item The \emph{egalitarian value (EV)} of an allocation $p$ with respect to valuation profile $v$ is	$EV(p,v)=\inf\{\frac{u_i(p(i))}{u_i(O)} \midd i\in N\}.$
	
	\item For a given valuation profile, the \emph{optimal egalitarian value (OEV)} is the maximum possible egalitarian value that can be achieved 
	$OEV(v)=\sup_{\lambda}\{\exists p\in \mathcal{P}\midd EV(p,v)=\lambda\}.$

\item For a given valuation profile $v$, an allocation $p$ \emph{achieves approximation ratio}
$\frac{EV(p,v)}{OEV(v)}.$

\item For a given mechanism $J$ and valuation profile $v$, we will say that $J$ \emph{achieves approximation ratio of $J$ for valuation $v$} as 
$\aar(J, v) = \frac{EV(J(v),v)}{OEV(v)}.$
	 
\item An allocation rule $J$ \emph{guarantees an approximation ratio} of $\guar(J)$ where $\guar(J)$ is defined as 
$\guar(J)=\inf_{v\in V^n} \{\aar(J,v)\}.$


%
\end{itemize}
The guaranteed approximation ratio $\guar(J)$ is the worst-case guarantee over all instances of the problem that we will be looking to maximize in our theoretical results. 
%




\vspace{-1em}
\section{Theoretical Results}\label{sec:theory}

We first note that for a deterministic mechanism $J$, $\guar(J)=0$;  in the worst case, if all the agents only value the same object then all agents get zero utility except the agent who gets the valued object. 
%
From now on, we will focus on randomized mechanisms. We start with the following lemma about proportional mechanisms.		

%

		\begin{lemma}
			For any mechanism $J$  that is proportional, $\guar(J)\geq n^{-1}$.
			\end{lemma}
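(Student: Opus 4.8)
The plan is to show that any proportional allocation already achieves egalitarian value at least $n^{-1}$, and since $OEV(v) \le 1$ always, the approximation ratio $\aar(J,v) = EV(J(v),v)/OEV(v)$ is then at least $n^{-1}$ for every profile $v$.

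\textbf{Step 1: Lower bound on $EV$ of a proportional allocation.}
Fix a valuation profile $v$ and let $p = J(v)$. By definition of proportionality, $u_i(p(i)) \ge \frac{1}{n} u_i(O)$ for every $i \in N$. Dividing through by $u_i(O)$ (when $u_i(O) > 0$) gives $\frac{u_i(p(i))}{u_i(O)} \ge \frac{1}{n}$ for each $i$, hence $EV(p,v) = \inf_{i \in N} \frac{u_i(p(i))}{u_i(O)} \ge n^{-1}$.

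\textbf{Step 2: Upper bound on $OEV$.}
For any allocation $q \in \mathcal{P}$ and any agent $i$, we have $u_i(q(i)) = \sum_{j} q(i)(j) v_{ij} \le \sum_j v_{ij} = u_i(O)$, since $q(i)(j) \in [0,1]$. Therefore $\frac{u_i(q(i))}{u_i(O)} \le 1$ for every $i$, so $EV(q,v) \le 1$, and taking the supremum over $q$ yields $OEV(v) \le 1$.

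\textbf{Step 3: Combine.}
Putting Steps 1 and 2 together, for every $v \in V^n$,
\[
\aar(J,v) = \frac{EV(J(v),v)}{OEV(v)} \ge \frac{n^{-1}}{1} = n^{-1},
\]
and taking the infimum over all $v$ gives $\guar(J) \ge n^{-1}$, as required.

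\textbf{Main obstacle.}
The only real subtlety is the handling of agents with $u_i(O) = 0$: the ratio $u_i(p(i))/u_i(O)$ is then $0/0$, so one should either adopt the convention (implicit in such settings) that these agents are ignored in the definition of $EV$, or assume without loss of generality that every agent has strictly positive total value (rescaling to $u_i(O) = 1$). Otherwise the argument is entirely routine; the content of the lemma is just the observation that $OEV(v) \le 1$ lets the trivial proportionality bound on $EV$ transfer to a bound on the approximation ratio.
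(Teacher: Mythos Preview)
Your proof is correct and follows essentially the same approach as the paper's own proof: show that proportionality forces $EV(J(v),v)\ge n^{-1}$, note that $OEV(v)\le 1$, and divide. Your version is in fact slightly more thorough, since you justify the bound $OEV(v)\le 1$ explicitly and flag the $u_i(O)=0$ edge case, whereas the paper simply asserts $OEV(v)\le 1$ without comment.
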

			\begin{proof}
				If the allocation $p$ is proportional then for each $i\in N$, $u_i(p(i))\geq n^{-1} (u_i(O))$. Since $EV(p,v)\geq  \inf_{i\in n}(u_i(p(i))/(u_i(O))$ and $(u_i(p(i))/(u_i(O)) \geq n^{-1}$ for all $i\in N$, we get that $EV(x)\geq n^{-1}$. Since $OEV(v)\leq 1$,  $\frac{EV(J(p,v))}{OEV(v)}\geq n^{-1}$.
			\end{proof}

		Since both PS and RSD are proportional~\citep[see \eg]{AzYe14a, BoMo01a}, we obtain the following guarantee on their approximation ratio.
		
		\begin{corollary}
			$\guar(PS)\geq n^{-1}$ and $\guar(RSD)\geq n^{-1}$.
		\end{corollary}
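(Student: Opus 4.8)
The plan is to derive this directly from the preceding lemma. The lemma states that for any proportional mechanism $J$, we have $\guar(J) \geq n^{-1}$, so all that remains is to verify that both PS and RSD are proportional mechanisms, i.e., that they always return a proportional allocation.

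First I would recall the definition: an allocation $p$ is proportional if $u_i(p(i)) \geq \frac{1}{n} u_i(O)$ for every agent $i$. For RSD, I would argue as follows: each agent appears in the randomly chosen permutation in position $k$ with probability $1/n!$ over all permutations, but more usefully, by symmetry each agent is equally likely to be the first picker; conditioned on being first, an agent picks the best $m/n$ units of the objects and obtains at least a $1/n$ fraction of $u_i(O)$. Averaging over all positions (and noting that later picks only give nonnegative additional utility), the expected utility of agent $i$ under RSD is at least $\frac{1}{n} u_i(O)$. For PS, I would invoke the well-known fact that PS satisfies SD proportionality: each agent eats at constant speed and the eating process runs for total time summing to the quantity of objects, so each agent consumes a $1/n$ share of the total object mass weighted toward their more preferred objects, which dominates the uniform allocation and hence yields at least $\frac{1}{n} u_i(O)$ in expectation for any cardinal utilities consistent with the ordinal preferences. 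Both facts are standard and are cited in the text (\citep{AzYe14a, BoMo01a}), so I would simply cite them rather than reprove them.

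Once proportionality of PS and RSD is established (by citation), the corollary follows immediately by applying the lemma with $J = PS$ and $J = RSD$ respectively. There is essentially no calculation to grind through.

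The only mild obstacle is being careful about the adaptation of RSD to the case $n < m$ (flagged in the paper's footnote): one must confirm that when each agent picks $m/n$ units of objects, the first picker can indeed secure at least a $1/n$ fraction of their total value, which holds because picking the top $m/n$ units of a set of $m$ unit-objects always yields at least the average value, i.e. at least $u_i(O)/n$. Since this is exactly the content of the cited proportionality results, I would not belabor it.

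\begin{proof}
	Both PS and RSD are proportional mechanisms~\citep[see \eg][]{AzYe14a,BoMo01a}: in RSD, each agent is equally likely to be the first to pick, in which case it secures at least a $1/n$ fraction of its total value for the objects, and subsequent picks only add nonnegative utility; in PS, every agent eats object mass at the same constant speed, so each agent obtains in expectation at least a $1/n$ share (weighted toward its more preferred objects) of the total object mass, which gives utility at least $\frac{1}{n} u_i(O)$ for every $i \in N$. Applying the preceding lemma to $J = PS$ and $J = RSD$ yields $\guar(PS) \geq n^{-1}$ and $\guar(RSD) \geq n^{-1}$.
\end{proof}
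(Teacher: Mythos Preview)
Your approach is exactly the paper's: invoke the cited proportionality of PS and RSD and apply the preceding lemma. The corollary follows immediately, and the paper gives no more detail than a citation.

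One caveat: the parenthetical justification you give for RSD's proportionality does not actually work. You argue that with probability $1/n$ the agent is first and then secures at least $u_i(O)/n$, and that later positions contribute nonnegative utility; but averaging this only yields $\frac{1}{n}\cdot \frac{u_i(O)}{n} = u_i(O)/n^2$, not $u_i(O)/n$. The correct argument is that an agent in position $k$ can always secure at least the value of her $k$-th best bundle of $m/n$ objects, and averaging $\frac{1}{n}\sum_{k=1}^{n}(\text{value of $k$-th best bundle}) = u_i(O)/n$. Since your formal proof ultimately rests on the citation rather than on this sketch, the proof stands; just fix or drop the RSD explanation.
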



		\noindent A mechanism satisfies the \emph{favourite share} property if whenever all the agents have the same most preferred object then each agent is assigned to it with probability $n^{-1}$. We obtain the following theorem.
				
				\begin{theorem}\label{thm:fav-share}
		For any mechanism $J$ that satisfies the favourite share property, $\guar(J) = O(n^{-1})$.
					\end{theorem}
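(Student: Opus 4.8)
The plan is to exhibit a single family of valuation profiles, parametrised by $n$, on which every mechanism satisfying the favourite share property achieves egalitarian value $O(n^{-1})$ while the optimal egalitarian value is $\Omega(1)$ (indeed close to $1$). Since $\guar(J)=\inf_v \aar(J,v)$, a single bad instance for each $n$ suffices to establish the $O(n^{-1})$ upper bound.

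The construction I would use: take $m=n$ objects (or some convenient multiple), and let object $o_1$ be the common favourite of all $n$ agents. To force a high OEV, I would give each agent $i\ge 2$ a private ``backup'' object, say $o_i$, that is valued highly, while agent $1$ has only $o_1$ to rely on. Concretely, set $v_1(o_1)=1$ and $v_1(o_j)=\varepsilon$ (or $0$) for $j\neq 1$; and for $i\ge 2$, set $v_i(o_1)$ slightly above $v_i(o_i)$ so that $o_1$ is genuinely the unique most preferred object, with $v_i(o_i)$ still close to the agent's total value. Then an assignment that gives $o_1$ entirely to agent $1$ and $o_i$ to agent $i$ for $i\ge2$ has egalitarian value bounded below by a constant, so $OEV(v)=\Omega(1)$. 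On the other hand, the favourite share property forces the mechanism to allocate $o_1$ to agent $1$ with probability only $n^{-1}$; since agent $1$'s utility comes essentially only from $o_1$, agent $1$'s realised value is at most $n^{-1}+ (m-1)\varepsilon$, hence $EV(J(v),v)=O(n^{-1})$ after choosing $\varepsilon$ small (e.g. $\varepsilon = n^{-2}$). Dividing, $\aar(J,v)=O(n^{-1})$, which gives the theorem.

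The main thing to get right is the tension between two requirements on the profile: object $o_1$ must be the strict common favourite so that the favourite share property bites (this caps agent $1$'s probability on $o_1$ at $n^{-1}$), yet the $o_i$'s for $i \ge 2$ must carry enough value that the optimum can route every agent other than $1$ to a distinct high-value object, keeping $OEV$ bounded away from $0$. A clean way to handle this is to make agent $1$'s valuation essentially a point mass on $o_1$ (so the favourite-share constraint directly upper-bounds agent $1$'s welfare), while each other agent's valuation is spread over $\{o_1,o_i\}$ with $v_i(o_1)$ only infinitesimally larger than $v_i(o_i)$; normalising $u_i(O)=1$ then makes $u_i(p(i))$ and $\frac{u_i(p(i))}{u_i(O)}$ coincide. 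The only delicate bookkeeping is verifying that $OEV(v)$ is a constant (just evaluate the egalitarian value of the explicit perfect-matching allocation) and that agent $1$'s egalitarian ratio under $J$ is at most $n^{-1}+o(n^{-1})$ — both are short computations once the parameters are fixed.

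I expect essentially no serious obstacle here; the result is a direct consequence of the favourite share definition. If one wants the bound to be genuinely $\Theta(n^{-1})$ rather than merely $O(n^{-1})$, it would combine with the earlier Corollary giving $\guar(J)\ge n^{-1}$ whenever $J$ is also proportional (as PS and RSD are), but that is already remarked separately in the paper and not part of this statement.
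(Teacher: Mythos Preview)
Your proposal is correct and follows essentially the same approach as the paper: the paper also takes $m=n$, sets $v_1(o_1)=1$ with $v_1(o_j)=0$ otherwise, and for $i\ge 2$ sets $v_i(o_1)=0.5+\epsilon$, $v_i(o_i)=0.5-\epsilon$, $v_i(o_j)=0$ otherwise, then observes $OEV(v)=0.5-\epsilon$ while the favourite share property forces agent~$1$'s utility to be $1/n$. Your construction and analysis match this exactly up to the specific numerical choices for the $v_i$'s.
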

					\begin{proof}
				Consider the following valuation profile with $n=m$, where $\epsilon$ is an arbitrarily small positive value.
						\[v_1(o_j) = \begin{cases} 1, &\mbox{if } j = 1 \\
					0,& \mbox{otherwise}. \end{cases}\]

					For $i\in \{2,\ldots, n\}$,

						\[v_i(o_j) = \begin{cases} 0.5+\epsilon, &\mbox{if } j = 1 \\
						 0.5-\epsilon,&\mbox{if } j = i\\
						  0,&\mbox{otherwise}.\end{cases}\]

\noindent Note that in the assignment which achieves $OEV$, agent $1$ is assigned object $o_1$ and the rest of the agents get utility $0.5-\epsilon$ and hence $OEV(v)=0.5-\epsilon$.  On the other hand, $J$ gives $1/n$ of $o_1$ to each of the agents so that agent $1$ gets utility $1/n$. Since $\epsilon$ can be arbitrarily small, it follows that $\guar(J)=O(n^{-1})$. 
						 \end{proof}
							 
							 \noindent We remark here that Theorem~\ref{thm:fav-share} holds even if agents have strict preferences; the utilities can be perturbed slightly to reflect strict preferences.  Theorem~\ref{thm:fav-share} gives us the following.

%
%
%
%
%
%

				\begin{corollary}
					$\guar(RSD)=O(n^{-1})$  and $\guar(PS)=O(n^{-1})$. 
					\end{corollary}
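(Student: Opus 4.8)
The plan is to obtain both bounds as immediate corollaries of Theorem~\ref{thm:fav-share}: I would simply verify that RSD and PS each satisfy the favourite share property, after which the claimed $O(n^{-1})$ guarantee follows verbatim from that theorem. So the only real content is the (routine) verification, which I would carry out separately for the two mechanisms; I do not expect any genuine obstacle.

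For PS, I would take a profile in which every agent ranks $o_1$ first. By the definition of PS, at time $0$ all $n$ agents simultaneously begin eating $o_1$ at the same unit speed; since $o_1$ carries total probability weight one, it is completely consumed at time $1/n$, and at that moment each agent has eaten exactly a $1/n$ fraction of it. Hence PS assigns $o_1$ to each agent with probability $1/n$, which is precisely the favourite share property.

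For RSD, I would again take a profile in which every agent ranks $o_1$ first. In each realization of the random permutation, the agent appearing first picks, among its chosen $m/n$ units, all of $o_1$, since $o_1$ is still entirely unallocated when that agent moves and it is that agent's most preferred object. Thus in every realization $o_1$ is entirely allocated to whichever agent comes first in the permutation; by symmetry of the uniform permutation, each agent is first with probability $1/n$, so in the induced random allocation each agent receives $o_1$ with probability $1/n$. Hence RSD also satisfies the favourite share property.

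Applying Theorem~\ref{thm:fav-share} to each of the two mechanisms then yields $\guar(RSD) = O(n^{-1})$ and $\guar(PS) = O(n^{-1})$. I would also note that, combined with the earlier Corollary (which gives the matching $\guar \geq n^{-1}$ lower bound for both, since each is proportional), these bounds in fact pin down $\guar(RSD) = \guar(PS) = \Theta(n^{-1})$, although only the upper bound is asserted in the statement.
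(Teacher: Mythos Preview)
Your approach matches the paper's: the corollary is stated immediately after Theorem~\ref{thm:fav-share} with no explicit proof, the implicit argument being exactly what you spell out, namely that RSD and PS each satisfy the favourite share property and hence the theorem applies. Your verification for PS is correct, and for RSD the symmetry argument is the right idea; note only that your phrasing ``the agent appearing first picks \ldots\ all of $o_1$'' tacitly assumes $m\ge n$, whereas a cleaner justification (valid for any $m$) is that by anonymity of RSD all agents receive the same expected share of $o_1$, and since $o_1$ is entirely consumed that share must equal $1/n$. This is a minor point and does not affect correctness, since the construction in Theorem~\ref{thm:fav-share} has $m=n$ anyway.
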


				\begin{theorem}\label{thm:sd-envyfree}
		For any mechanism $J$ that satisfies SD envy-freenes, $\guar(J) = O(n^{-1})$.
					\end{theorem}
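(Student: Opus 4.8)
The plan is to reuse essentially the same valuation profile that establishes Theorem~\ref{thm:fav-share}, after first observing that SD envy-freeness already forces a ``favourite-share''-type constraint on any commonly top-ranked object. So I would first prove the following auxiliary claim: if every agent ranks some object $o$ strictly first, then in any SD envy-free allocation $p$ we have $p(i)(o)=p(j)(o)$ for all $i,j\in N$, and hence $p(i)(o)\le n^{-1}$. The claim holds because the stochastic dominance of $p(i)$ over $p(j)$ with respect to agent $i$'s preference, when evaluated at the top-ranked object $o$, degenerates to the single inequality $p(i)(o)\ge p(j)(o)$; applying SD envy-freeness to both ordered pairs $(i,j)$ and $(j,i)$ yields equality, and combining this with $\sum_{k\in N}p(k)(o)\le 1$ gives $p(i)(o)\le n^{-1}$.

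Next I would instantiate exactly the profile from the proof of Theorem~\ref{thm:fav-share}: take $n=m$, let agent $1$ value only $o_1$ (value $1$ for $o_1$, $0$ otherwise), and for $i\in\{2,\dots,n\}$ let $v_i(o_1)=0.5+\epsilon$, $v_i(o_i)=0.5-\epsilon$, and $0$ for all other objects, with $\epsilon>0$ arbitrarily small. Here every agent ranks $o_1$ strictly first, so the auxiliary claim applies to $p=J(v)$: agent $1$ obtains $u_1(p(1))=p(1)(o_1)\le n^{-1}$ while $u_1(O)=1$, hence $EV(J(v),v)\le n^{-1}$. On the other hand, the deterministic assignment giving $o_1$ to agent $1$ and $o_i$ to agent $i$ for $i\ge 2$ is feasible and has egalitarian value $0.5-\epsilon$, so $OEV(v)\ge 0.5-\epsilon$. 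Therefore $\aar(J,v)=EV(J(v),v)/OEV(v)\le \tfrac{1}{(0.5-\epsilon)\,n}$, and since $\guar(J)\le \aar(J,v)$ for this particular $v\in V^n$, we conclude $\guar(J)=O(n^{-1})$. As in Theorem~\ref{thm:fav-share}, the profile can be perturbed to make all preferences strict without affecting the bound.

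I do not expect a genuine technical obstacle here; the only point that requires care is the auxiliary claim. In particular one should note that SD envy-freeness is in general strictly weaker than the favourite-share property (it does not force the common favourite to be fully allocated — the all-zero allocation is SD envy-free), so Theorem~\ref{thm:fav-share} cannot simply be invoked as a black box. What survives is exactly the equality of the top-object probabilities across agents, and the resulting bound $p(i)(o)\le n^{-1}$ is all that the approximation-ratio argument needs.
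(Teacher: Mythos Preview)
Your proposal is correct and follows essentially the same route as the paper: the paper's proof simply asserts that SD envy-freeness implies the favourite share property and then invokes Theorem~\ref{thm:fav-share} directly. Your treatment is in fact more careful than the paper's on exactly the point you flag: the favourite share property as stated demands each agent receive \emph{exactly} $n^{-1}$ of the common top object, which SD envy-freeness does not force (the all-zero allocation is a counterexample), whereas you correctly isolate the weaker consequence $p(i)(o)\le n^{-1}$ that the approximation argument actually needs.
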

					\begin{proof}
						SD envy-freeness implies the favourite share property. If an allocation does not satisfy the favourite share property, then the agent who gets less than $1/n$ of his most preferred object will be envious of another agent if he has extremely high utility for the object. 
						\end{proof}
		
	%

	In the following, we will prove an upper bound on the approximation ratio of the OEV for two classes of mechanisms: envy-free mechanisms and truthful mechanisms. First we prove the following lemma that states that when looking for upper bounds on the approximation ratio, it suffices to only consider anonymous mechanisms. Similar lemmas have been proven before in literature \cite{RKFZ14a,GuCo10a}.
	
	\begin{lemma}\label{lem:anonym}
		Let $J$ be a mechanism with approximation ratio $\rho$. Then, there exists another mechanism $J'$ which is anonymous and has approximation ratio at least $\rho$. Furthermore, if $J$ is truthful or truthful-in-expectation, then $J'$ is truthful-in-expectation and if $J$ is envy-free, $J'$ is envy-free.
	\end{lemma}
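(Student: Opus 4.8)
The plan is to use a standard \emph{symmetrization} (averaging over agent permutations) argument. For a permutation $\sigma \in S_n$ and a profile $v = (v_1,\dots,v_n) \in V^n$, let $\sigma v$ denote the profile in which the agent with index $i$ holds valuation $v_{\sigma^{-1}(i)}$, and for an allocation $p$ let $\sigma^{-1}p$ denote the allocation with $(\sigma^{-1}p)(i) = p(\sigma(i))$. Define
\[
  J'(v) \;=\; \frac{1}{n!}\sum_{\sigma \in S_n} \sigma^{-1}\bigl(J(\sigma v)\bigr).
\]
Since a convex combination of feasible allocations is feasible, $J'(v) \in \mathcal{P}$. Anonymity is immediate from the change of summation variable $\tau = \sigma\pi$: for any $\pi \in S_n$, $J'(\pi v) = \frac{1}{n!}\sum_\tau \pi\tau^{-1}\bigl(J(\tau v)\bigr) = \pi\bigl(J'(v)\bigr)$.

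Next I would bound the approximation ratio. Note first that $EV(\cdot,v)$ is concave in the allocation, being an infimum of the linear functionals $p \mapsto u_i(p(i))/u_i(O)$, and that $OEV$ depends only on the multiset of valuations, so $OEV(\sigma v) = OEV(v)$. For each $\sigma$, relabelling the allocation and the profile consistently leaves the set of ratios $\{u_i(p(i))/u_i(O) : i \in N\}$ unchanged, hence $EV(\sigma^{-1}(J(\sigma v)),v) = EV(J(\sigma v),\sigma v) \ge \rho\, OEV(\sigma v) = \rho\, OEV(v)$. Combining concavity with this branchwise bound, $EV(J'(v),v) \ge \frac{1}{n!}\sum_\sigma EV(\sigma^{-1}(J(\sigma v)),v) \ge \rho\, OEV(v)$, so $\aar(J',v) \ge \rho$ for every $v \in V^n$ and therefore $\guar(J') \ge \rho$.

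Finally I would check the two structural properties, both of which survive the averaging because each $u_i$ is linear. For envy-freeness: if $J(\sigma v)$ is envy-free for $\sigma v$, then—since in $\sigma v$ the agent at index $\sigma(i)$ holds valuation $v_i$—we get $u_i(J(\sigma v)(\sigma(i))) \ge u_i(J(\sigma v)(\sigma(j)))$ for all $i,j$, i.e.\ $\sigma^{-1}(J(\sigma v))$ is envy-free for $v$; and a convex combination of allocations each envy-free for $v$ is again envy-free for $v$. For truthfulness: agent $i$'s utility under $J'(v)$ equals $\frac{1}{n!}\sum_\sigma u_i(J(\sigma v)(\sigma(i)))$, and a misreport $v_i'$ by $i$ corresponds, within branch $\sigma$, exactly to a misreport by the agent at index $\sigma(i)$—whose true valuation in $\sigma v$ is $v_i$—in the run of $J$ on $\sigma v$; truthfulness (or truthfulness-in-expectation) of $J$ bounds each summand, and summing shows $J'$ is truthful-in-expectation. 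I expect the main obstacle to be the bookkeeping in this last step: fixing the permutation conventions so that ``$i$ deviates in $J'$'' maps cleanly onto ``$\sigma(i)$ deviates in $J$ on $\sigma v$'' in every branch. Once that correspondence is set up, the rest is a direct term-by-term comparison.
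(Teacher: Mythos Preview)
Your proposal is correct and follows essentially the same symmetrization idea as the paper: define $J'$ by applying a uniformly random permutation to the agents, running $J$, and undoing the permutation. Your version is considerably more detailed than the paper's sketch---you make explicit the concavity of $EV(\cdot,v)$ needed to pass from branchwise bounds to the averaged allocation, and you spell out the envy-freeness and truthfulness preservation that the paper leaves implicit---but the construction and the logic are the same.
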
	
	
	\begin{proof}
		Let $J'$ be the mechanism that on input valuation profile $v$ first applies a uniformly random permutation to the set of agents and then runs mechanism $J$ on $v$. 
		Obviously, $J'$ is anonymous. Additionally, since $v$ can be an input to $J$ and the approximation ratio is calculated over all possible instances, the ratio of $J$ cannot be better than the ratio of $J'$. Finally, since the permutation is independent of valuations, if $J$ is truthful or truthful-in-expectation, $J'$ is truthful-in-expectation.
	\end{proof}	
	
	Now we state the following theorem, bounding the approximation ratio of any truthful-in-expectation mechanism. RSD and the uniform mechanism (that gives assignment probability of $1/n$ of each object to each agent) are strategyproof and ordinal mechanisms that both achieve a $\Theta(n^{-1})$ approximation of the OEV. We prove that for any truthful-in-expectation mechanism $J$, it holds that $guar(J)= O(n^{-1/5})$.
	
	\begin{theorem}\label{thm:truthful}
		For any truthful-in-expectation mechanism $J$,  $guar(J) = O(n^{-1/5})$.
	\end{theorem}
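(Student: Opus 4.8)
The plan is to prove the bound by a hard-instance construction, reducing first to anonymous mechanisms. By Lemma~\ref{lem:anonym} it is enough to show that every anonymous truthful-in-expectation mechanism $J$ has $\guar(J) = O(n^{-1/5})$, since the anonymisation of a truthful-in-expectation mechanism is again truthful-in-expectation with no worse approximation ratio. So fix such a $J$ and construct a family of valuation profiles, parameterised by an integer $k$ that will be set to a suitable power of $n$ at the end, on which $\aar(J,v) = O(n^{-1/5})$.

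The instance should consist of a small collection of ``contested'' objects together with a large block of $b$ agents whose valuations are \emph{almost} identical: identical in ordinal ranking and differing only slightly, and in a controlled way, in the intensities they place on the contested objects, so that $J$ cannot separate them except by relying on these tiny differences. The profile is designed so that, on the one hand, there is an explicit assignment matching each of the $b$ block agents to a distinct private object, giving $OEV(v) = \Omega(1)$; and on the other hand this good assignment is tailored to the true profile, so that any allocation forced to be near-uniform over the block leaves each block agent with utility only $\Theta(1/b)$ of its total. Truthfulness is what prevents $J$ from exploiting the intensity differences to recover the good matching.

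For the core argument I would use the revealed-preference constraints that truthfulness imposes between $v$ and its single-agent deviations, equivalently the Myerson-style fact that a truthful-in-expectation mechanism offers each agent a menu of allocations that is fixed once the other agents' reports are fixed and lets her select a point maximising her expected utility. Feasibility of the overall allocation then forces the menus available to the block agents to be poor, in the sense that no point on such a menu gives its agent more than an $O(n^{-1/5})$ fraction of her total value; otherwise too much probability mass would be concentrated on too few objects across the block. Aggregating this over the block agents (or averaging over a random assignment of agents to roles) shows that some block agent is left with utility ratio $O(n^{-1/5})$ while $OEV(v) = \Omega(1)$, yielding the theorem after optimising $k$.

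The main obstacle is quantitative. The construction must make the tension between ``the good assignment needs the block spread over many distinct objects'' and ``truthfulness together with feasibility keeps every block agent's menu poor'' genuinely tight, and this requires simultaneously balancing the block size $b$, the number of contested objects, and the sizes of the intensity gaps, so that the cumulative loss lands exactly at the exponent $1/5$ rather than something weaker; carrying out this multi-parameter optimisation while keeping all the feasibility and revealed-preference inequalities valid is where the real work is. A secondary point is that everything must be done for randomised allocations, where the truthfulness and menu characterisations hold only at the level of expected utilities, and one must exhibit the witnessing high-$OEV$ allocation as a genuinely feasible fractional assignment.
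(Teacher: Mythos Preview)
Your high-level strategy matches the paper: reduce to anonymous mechanisms via Lemma~\ref{lem:anonym}, build a hard instance with a block of ``special'' agents each of whom has a distinct private object, and use single-agent deviations to show truthfulness prevents the mechanism from routing those agents to their private objects. But what you have written is a plan, not a proof, and the plan as described has a genuine gap: you never say what forces the block agents' allocations to be poor. You write that ``feasibility then forces the menus to be poor,'' but feasibility alone does no such thing---if only the block agents and the contested objects are present, the mechanism can simply read off the tiny intensity differences and give each block agent her private object, which is feasible, truthful, and achieves the optimum. The paper's construction has a third tier that your sketch lacks: for every private object there is an additional group of $\sqrt{n_1}$ ``crowding'' agents who value \emph{only} that object. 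When a block agent deviates to report the crowding agents' valuation, anonymity caps her share of her private object at $1/(\sqrt{n_1}+1)$, and truthfulness transports that cap back to the true profile. Without these crowding agents there is no leverage from anonymity at all, since your block agents are pairwise distinguishable.

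A second, smaller discrepancy: you describe the block agents as ``identical in ordinal ranking and differing only slightly in intensities.'' In the paper's instance the block agents are \emph{not} ordinally identical---each one places her $\epsilon$ mass on a different object---and it is precisely this that lets the optimal fractional assignment route them to distinct objects and achieve $OEV = \Omega(1/n_1)$. The $n^{-1/5}$ exponent then falls out of the specific sizes $|A|=n_1$, $|B|=n_1^2$, $|C|=n_1^{5/2}$ and the pigeonhole on the total probability mass of object~$1$; this is the ``multi-parameter optimisation'' you allude to, but it only works once the three-tier structure is in place.
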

	
	\begin{proof}
		Let $J$ be a truthful-in-expectation mechanism; by Lemma \ref{lem:anonym}, we can assume without loss of generality that $J$ is anonymous. Consider the following valuation profile $v$ (summarized in Figure~\ref{figure:visual}) with $n = n_1 + {n_1}^2 + n_1^{5/2}$ agents and ${n_1}^2+1$ objects, where $\epsilon$ will be defined later:
		
		\begin{itemize}
			\item For every agent $i \in A=\{1,\ldots, n_1\}$, it holds that $v_i(1) = 1$ and $v_i(j)=0$ for every object $j \neq 1$.
			\item For every agent $i \in B=\{n_1+1, {n_1}^2\}$, it holds that $v_i(1)=1-\epsilon$, $v_i(i-n_1+1)=\epsilon$ and $v_i(j)=0$ for all objects $j \in O \backslash \{1,i-n_1+1\}$.
			\item For every $\ell = 1, \ldots, {n_1}^2$ and agent $i \in C_{\ell}=\{{n_1}^2+(\ell-1)\sqrt{n_1}+1, {n_1}^2+\ell \sqrt{n_1}\}$, it holds that $v_i(\ell+1)=1$ and $v_i(j) = 0$ for all objects $j \neq \ell$. 
		\end{itemize}

		\begin{figure}
		
				\vspace{-2em}
				\centering
		\scalebox{0.8}{
		\begin{tabular}{cl}

\small
		                $\mathbf{{n_1}^2 + 1}$ \bf{Objects} &
		                $\begin{matrix}
		                \bf{1} & \bf{2} & \bf{3}& \bf{4} & \cdots & \mathbf{n_1^2+1}\\
		                \end{matrix}$
		                \\
		                \\
		                $n_1$ agents, set $A$ &
		                $\begin{matrix}
		                        1 & 0 & 0 & 0 & \cdots & 0\\
		                        1 & 0 & 0 & 0 & \cdots & 0\\
		                        \vdots & \vdots & \vdots & \vdots & \ddots & \vdots\\
		                        1 & 0 & 0 & 0 & \cdots & 0\\
		                \end{matrix}$
		                \\
		                \\
		                ${n_1}^2$ agents, set $B$ &
		                $\begin{matrix}
		                        1-\epsilon & \epsilon & 0 & 0& \cdots & 0\\
		                        1-\epsilon & 0 & \epsilon & 0 & \cdots & 0\\
		                        1-\epsilon & 0 & 0.& \epsilon & \cdots & 0 \\
		                        \vdots  & \vdots & \vdots & \vdots & \ddots & \vdots \\
		                        1-\epsilon & 0& 0 & 0 & \cdots & \epsilon \\
		                \end{matrix}$
		                \\
		                \\
		                ${n_1}^{\frac{5}{2}}$ agents, set $C$ &
		                $\begin{matrix}
		                        0 & 1 & 0 & 0 & \cdots & 0\\
		                        \vdots & \vdots & \vdots & \vdots & \ddots & \vdots & \sqrt{n_1} \text{ agents, set } C_1\\
		                        0 & 1 & 0 & 0 & \cdots & 0\\
		                        0 & 0 & 1 & 0 & \cdots & 0\\
		                        \vdots & \vdots & \vdots & \vdots & \ddots & \vdots & \sqrt{n_1} \text{ agents, set } C_2 \\
		                        0 & 0 & 1 & 0 & \cdots & 0\\
		                        \vdots & \vdots & \vdots & \vdots & \ddots & \vdots & \vdots\\
		                                0 & 0 & 0 & 0 & \cdots & 1 \\
		                                \vdots & \vdots & \vdots & \vdots & \ddots & \vdots& \sqrt{n_1} \text{ agents, set } C_{n_1^2}\\
		                                0 & 0 & 0 & 0 & \cdots & 1\\
		                \end{matrix}$ \\

		                \end{tabular}
		                }
						\caption{Valuation profile $v$.}
							\vspace{-1.5em}
								\label{figure:visual}
						\end{figure}

		
		In other words, the instance consists of $n_1$ agents that have value $1$ for the first object (set $A$) and $0$ for everything else, ${n_1}^2$ agents that value object $1$ at $1-\epsilon$ and another object at value $\epsilon$ (set $B$) and $n_1^{5/2}$ agents that value a single object at $1$ (set $C= \cup_l C_l$), such that $\sqrt{n_1}$ agents value the object that some agent in the set $B$ has value $\epsilon$ for.
		
		Now if we let $\epsilon = 1/(n_1 - \sqrt{n_1})$, then the egalitarian value of the optimal allocation is at least $1/n_1$; an allocation with such a value is the following:
		\begin{itemize}
			\item Every agent $i \in A$ is allocated $1/n_1$ of object $1$.
			\item Every agent $i \in B$ is allocated $(n_1-\sqrt{n_1})/n_1)$ of the object they value at $\epsilon$.
			\item Every agent $i \in C_l$ is allocated $1/n_1$ of object $l+1$.
		\end{itemize}  
		Next consider a family of valuation profiles $\mathcal{V}$, consisting of profiles where all agents have the same valuations as in $v$, except one agent from $B$ that has value $1$ for the object that she had value $\epsilon$ in $v$ and $0$ for all other objects. Formally, for $\ell=1,\ldots,{n_1}^2$, we define a profile $v^{\ell} \in \mathcal{V}$ as follows:
		\begin{itemize}
			\item For every agent $i \neq n_1+\ell$, it holds that $v_i^{\ell}(j)= v_i(j)$ for all objects $j \in M$.
			\item For agent $n_1+\ell$, it holds that $v_{n_1+\ell}^\ell(\ell+1)=1$ and $v_{n_1+\ell}^\ell(j)=0$, for all objects $j \neq \ell+1$.
		\end{itemize}		
		Consider now any $\ell$ and the corresponding valuation profile $v^\ell$. Since $J$ is anonymous and agents in $C_{\ell} \cup \{n_1+\ell\}$ have identical valuations and since $|C_\ell|=\sqrt{n_1}$, the probability that agent $n_1+\ell$ is allocated object $\ell+1$ is at most $1/(\sqrt{n_1}+1)$ and her utility is hence at most $1/(\sqrt{n_1}+1)$. Now consider valuation profile $v$ and consider the probability $p(n_1+\ell)(\ell+1)$ that agent $n_1+\ell$ is allocated object $\ell+1$. By truthfulness, and since $v_{n_1+\ell}$ could be a misreport from $v_{n_1+\ell}^\ell$, it must hold that $p(n_1+\ell)(\ell+1) \leq 1/(\sqrt{n_1}+1) < 1/\sqrt{n}$. This implies that the contribution to the expected utility of agent $n_1+\ell$ from object $\ell+1$ is at most $\epsilon/\sqrt{n_1}$, which is at most $1/(n_1\sqrt{n_1}-n_1)$. 		
		
		Now consider the probability $p(n_1+\ell)(1)$ that agent $n_1+\ell$ is allocated object $1$. 
		From the arguments above, if $p(n_1+\ell)(1) < 1/(n_1\sqrt{n_1}-n_1)$, then the expected utility of agent $n_1+\ell$ is at most $2/(n_1\sqrt{n_1}-n_1)$ and the ratio is $O(1/\sqrt{n_1})$. Since $n=n_1+{n_1}^2+n_1^{5/2}$, that would mean that the theorem is proven. Hence, for $J$ to achieve a better ratio than $O(n^{-1/5})$, it has to be the case that for every $\ell=1,\ldots,{n_1}^2$, it holds that $p(n_1+\ell)(1) \geq 1/(n_1\sqrt{n_1}-n_1)$. This is not possible however, since then $\sum_{\ell=1}^{{n_1}^2}p(n_1+\ell)(1) \geq n_1/(\sqrt{n_1}-1) >1$. This completes the proof.
	\end{proof}

	Note that for utilitarian welfare maximization, Filos-Ratsikas et al. \cite{RKFZ14a} proved that an ordinal mechanism, RSD achieves the best approximation ratio among all truthful mechanisms. We conjecture that this is the case for the maximization of the egalitarian value as well, i.e. for any truthful mechanism $J$, $\guar(J)=O(n^{-1})$.
	
	We now turn our attention to envy-free mechanisms. For this class, we will prove an $O(n^{-1/5})$ upper bound as well; the proof actually uses the same valuation profile as the proof of Theorem \ref{thm:truthful}.
	
	\begin{theorem}\label{thm:ef}
		For any mechanism $J$ that satisfies envy-freeness, $\guar(J) = O(n^{-1/5})$.
	\end{theorem}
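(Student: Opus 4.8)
The plan is to recycle the valuation profile $v$ constructed in the proof of Theorem~\ref{thm:truthful} (the one depicted in Figure~\ref{figure:visual}), with the same choice $\epsilon = 1/(n_1-\sqrt{n_1})$, so that, exactly as shown there, $OEV(v)\ge 1/n_1$, while $n = n_1 + {n_1}^2 + n_1^{5/2} = \Theta(n_1^{5/2})$. Let $p=J(v)$ be the (envy-free) allocation returned by $J$. I would show that some agent in the set $B$ must receive utility $O(n_1^{-3/2})$ under $p$; combined with $OEV(v)\ge 1/n_1$ this gives $\aar(J,v)=O(n_1^{-1/2})=O(n^{-1/5})$, hence $\guar(J)=O(n^{-1/5})$.

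The only place where the argument genuinely differs from the proof of Theorem~\ref{thm:truthful} is the step that bounds $p(n_1+\ell)(\ell+1)$, the probability that the $B$-agent $n_1+\ell$ receives her secondary object $o_{\ell+1}$: there we invoked a deviation to a $C_\ell$-type report together with truthfulness, whereas here I would argue purely within the profile $v$ using envy-freeness. Fix $\ell\in\{1,\dots,{n_1}^2\}$. The $\sqrt{n_1}$ agents of $C_\ell$ all have the same valuation (value $1$ for $o_{\ell+1}$, $0$ elsewhere), so mutual non-envy forces them to receive equal utility, hence an equal probability $q_\ell$ of $o_{\ell+1}$; since the total weight of $o_{\ell+1}$ is at most $1$, we get $\sqrt{n_1}\,q_\ell\le 1$, i.e. $q_\ell\le 1/\sqrt{n_1}$. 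Moreover, each $C_\ell$-agent not envying agent $n_1+\ell$ yields $q_\ell=p(c)(\ell+1)\ge p(n_1+\ell)(\ell+1)$, so $p(n_1+\ell)(\ell+1)\le 1/\sqrt{n_1}$, and therefore the contribution of $o_{\ell+1}$ to agent $n_1+\ell$'s utility is at most $\epsilon/\sqrt{n_1}\le 1/({n_1}^{3/2}-n_1)$.

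From here the remainder follows the proof of Theorem~\ref{thm:truthful} essentially verbatim. If for some $\ell$ we have $p(n_1+\ell)(1)\le 1/({n_1}^{3/2}-n_1)$, then agent $n_1+\ell$'s total utility is at most $2/({n_1}^{3/2}-n_1)$, so $EV(p,v)\le 2/({n_1}^{3/2}-n_1)$ and $\aar(J,v)\le 2/(\sqrt{n_1}-1)=O(n^{-1/5})$, and we are done. Otherwise $p(n_1+\ell)(1)>1/({n_1}^{3/2}-n_1)$ for every $\ell=1,\dots,{n_1}^2$, whence $\sum_{\ell=1}^{{n_1}^2}p(n_1+\ell)(1)>n_1/(\sqrt{n_1}-1)>1$, contradicting that object $o_1$ has total allocation probability at most $1$.

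The main obstacle is exactly that step: extracting the bound $p(n_1+\ell)(\ell+1)\le 1/\sqrt{n_1}$ from envy-freeness rather than from an incentive deviation. The point to be careful about is that it uses only the envy-freeness constraints among the identical $C_\ell$-agents and from those agents toward the $B$-agent $n_1+\ell$; in particular, equal treatment of identical agents is itself a consequence of envy-freeness, so, unlike in the proof of Theorem~\ref{thm:truthful}, Lemma~\ref{lem:anonym} is not needed here. I would also state the bound over all $n$ in the usual way, noting that $\sqrt{n_1}=\Theta(n^{1/5})$ for this family of instances and that intermediate values of $n$ are handled by padding with dummy agents and objects.
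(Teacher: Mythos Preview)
Your proposal is correct and follows essentially the same approach as the paper: both reuse the profile $v$ from Theorem~\ref{thm:truthful} and replace the truthfulness step by the observation that any $C_\ell$-agent would envy agent $n_1+\ell$ if $p(n_1+\ell)(\ell+1)$ exceeded roughly $1/\sqrt{n_1}$, after which the remainder is verbatim the end of that proof. Your remark that Lemma~\ref{lem:anonym} is unnecessary here (equal treatment of identical agents already follows from envy-freeness) is correct and in fact slightly sharper than the paper's presentation.
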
 
	
	\begin{proof}
		Consider the valuation profile $v$ used in the proof of Theorem \ref{thm:truthful} and again consider the probability $p(n_1+\ell)(\ell+1)$ that agent $n_1+\ell$ is allocated object $\ell +1$.
		 Recall the definition of sets $A,B$ and $C$ from the proof of Theorem \ref{thm:truthful}. By envy-freeness, it holds that $p(n_1+\ell)(\ell+1) \leq 1/(\sqrt{n}+1) \leq 1/\sqrt{n}$ otherwise some agent $j \in C_{\ell}$ (who only values object $\ell+1$) would be envious of agent $n_1+\ell$.
		 
The rest of the steps are the same as in the proof of Theorem~\ref{thm:truthful}. Again, consider the probability $p(n_1+\ell)(1)$ that agent $n_1+\ell$ is allocated object $1$. 
Since $p(n_1+\ell)(1) < 1/(n_1\sqrt{n_1}-n_1)$, if $p(n_1+\ell)(1) < 1/(n_1\sqrt{n_1}-n_1)$ then for the same reasons mentioned in the last paragraph of the proof of Theorem \ref{thm:truthful}, we are done. Hence, we can assume that for every $\ell=1,\ldots,{n_1}^2$, it holds that $p(n_1+\ell)(1) \geq 1/(n_1\sqrt{n_1}-n_1)$. This is not possible however, since then $\sum_{\ell=1}^{{n_1}^2}p(n_1+\ell)(1) \geq n_1/(\sqrt{n_1}-1) >1$.  
	\end{proof}

\noindent From Theorem \ref{thm:ef}, we obtain the following corollary.
		
\begin{corollary}
	$\guar(CEEI) = O(n^{-1/5})$ and $\guar(PM) =O(n^{-1/5})$.
	\end{corollary}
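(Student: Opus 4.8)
The plan is to derive the corollary directly from Theorem~\ref{thm:ef} by observing that both CEEI and the pseudo-market mechanism PM always return envy-free allocations. Concretely, I would first recall why envy-freeness holds: in a competitive equilibrium with equal incomes, every agent receives a bundle that maximizes her utility within her budget set, and since all agents share the same budget (equal incomes) and face the same prices, each agent could have afforded any other agent's bundle; hence no agent envies another, and the allocation is envy-free (in expectation, interpreting the fractional allocation as a lottery). The same argument applies to the pseudo-market mechanism of \citet{HyZe79a}, which is exactly the competitive-equilibrium construction in the assignment setting and likewise yields envy-free random assignments.

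Given that observation, the corollary is immediate: Theorem~\ref{thm:ef} states that \emph{any} mechanism $J$ satisfying envy-freeness has $\guar(J) = O(n^{-1})$ --- more precisely $\guar(J)=O(n^{-1/5})$ --- and since CEEI and PM are envy-free mechanisms, plugging them in for $J$ gives $\guar(CEEI) = O(n^{-1/5})$ and $\guar(PM) = O(n^{-1/5})$. In writing this up I would simply state: ``Both CEEI and PM always return envy-free allocations \citep{HyZe79a,Vazi07a}; the claim then follows directly from Theorem~\ref{thm:ef}.''

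The only substantive point to be careful about is matching the notion of envy-freeness: Theorem~\ref{thm:ef}'s construction uses cardinal utilities and an allocation $p$, and the envy bound there is derived from the condition $u_i(p(i)) \ge u_i(p(j))$, i.e.\ envy-freeness with respect to the reported cardinal utilities. One must confirm that CEEI and PM are envy-free in precisely this sense (they are, since the competitive-equilibrium guarantee is in terms of the agents' actual cardinal utilities used to define the market), rather than only in some weaker ordinal sense. This is the main --- and quite mild --- obstacle; once it is noted, there is nothing further to prove, so the corollary's proof is essentially a one-line invocation of Theorem~\ref{thm:ef}.
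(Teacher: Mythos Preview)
Your proposal is correct and matches the paper's approach exactly: the paper simply states that the corollary follows from Theorem~\ref{thm:ef}, relying (implicitly) on the well-known fact that CEEI and the pseudo-market mechanism always return envy-free allocations. Your additional remark about checking that the relevant notion of envy-freeness is the cardinal one used in Theorem~\ref{thm:ef} is a sensible sanity check, though the paper does not spell this out.
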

	
	\noindent It would be interesting to provide a better bound for Theorem~\ref{thm:ef} or show it is optimal, i.e. come up with an envy-free mechanism that actually achieves the ratio. Finally, we consider the $OEV$ guarantees of ordinal mechanisms.
 %
				
				%
					%

	 %
					
				
						\begin{theorem}
							For any mechanism $J$ that is ordinal, $\guar(J) =O(n^{-1})$.
						\end{theorem}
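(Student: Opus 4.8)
The plan is to re-use the spirit of the favourite-share construction (Theorem~\ref{thm:fav-share}), but to ``symmetrise'' it so that it stays a bad instance no matter which agent an ordinal mechanism happens to shortchange on the commonly top-ranked object. First I would fix an arbitrary ordinal mechanism $J$ and a single ordinal profile $\sigma$ on $n$ agents and $m=n+1$ objects $o_1,\dots,o_{n+1}$, where agent $i$ has the strict ranking $o_1 \succ o_{i+1} \succ (\text{the remaining }n-1\text{ objects in some fixed order})$. Since $J$ is ordinal, $J$ returns one fixed allocation $p$ on every cardinal profile consistent with $\sigma$; I would commit to this $\sigma$ first and only afterwards choose the cardinal realisation adversarially against $p$.

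The engine is a pigeonhole step: because $\sum_{i\in N} p(i)(o_1) = \hat p(o_1)\le 1$, there is an agent $i^\ast$ with $p(i^\ast)(o_1)\le 1/n$. I would then define a cardinal profile $v \sim \sigma$ as follows: agent $i^\ast$ values $o_1$ at $1$ and every other object at a vanishingly small, strictly decreasing amount following $\sigma$'s order; every other agent $i$ values $o_1$ at $1/2+\epsilon$, its ``private'' object $o_{i+1}$ at $1/2-\epsilon$, and the rest negligibly (again following $\sigma$'s order, so the perturbations keep $v$ ordinally equivalent to $\sigma$ and make preferences strict, exactly as in Theorem~\ref{thm:fav-share}). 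Two estimates then close the argument. On one hand $EV(p,v)\le u_{i^\ast}(p(i^\ast))/u_{i^\ast}(O)\le p(i^\ast)(o_1)\cdot 1 + \eta \le 1/n+\eta$, where $\eta>0$ is as small as we like. On the other hand $OEV(v)=\Omega(1)$: the allocation that gives $o_1$ entirely to $i^\ast$ and $o_{i+1}$ entirely to each agent $i\neq i^\ast$ is feasible — the $o_{i+1}$ are pairwise distinct and distinct from $o_1$, and $o_{i^\ast+1}$ is simply left idle (allowed) — and it yields egalitarian value at least $1/2-\epsilon-\eta\ge 2/5$ for $\epsilon,\eta$ small. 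Hence $\aar(J,v)=EV(p,v)/OEV(v)=O(n^{-1})$, and since $\guar(J)=\inf_{v}\aar(J,v)$ we conclude $\guar(J)=O(n^{-1})$; as in Theorem~\ref{thm:fav-share} the argument works verbatim in the strict-preference domain.

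The delicate point — and the reason the asymmetric profile of Theorem~\ref{thm:fav-share} does not suffice — is that an ordinal mechanism need not satisfy the favourite-share property, so we cannot predict which agent ends up with little of $o_1$; the profile must therefore be symmetric enough that making \emph{any} single agent $i^\ast$ value only $o_1$ still leaves $OEV(v)$ bounded away from $0$. Taking $m=n+1$, so that each of the $n$ agents has its own private second choice and $i^\ast$'s private object can just sit unallocated, is precisely what buys this robustness; with only $m=n$ objects one runs into a conflict over the single leftover object and must split it, eroding the optimal egalitarian value. I expect that design choice, together with the routine check that the perturbed $v$ is genuinely ordinally equivalent to $\sigma$, to be the only real content of the proof; the rest is pigeonhole plus the two one-line bounds above.
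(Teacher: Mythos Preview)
Your proposal is correct and follows essentially the same construction as the paper: $n$ agents, $n+1$ objects, a common top object $o_1$ with distinct private second choices, pigeonhole to find an agent $i^\ast$ with $p(i^\ast)(o_1)\le 1/n$, and then an adversarial cardinal realisation making $i^\ast$ care almost only about $o_1$ while the others split their value between $o_1$ and their private object. The only cosmetic difference is that the paper arranges the tail of each ordering as a cyclic shift and invokes the anonymity lemma before the ``some agent gets at most $1/n$'' step, whereas you obtain that step directly from pigeonhole---your route is slightly cleaner here, since anonymity is not actually needed for the bound.
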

						\begin{proof}
							Consider the setting with $n$ agents and $n+1$ objects $\{o^*, o_0, \ldots, o_{n-1}\}$. 
							The preferences are as follows: each agent values $o^*$ the most. Agent $1$ has preference order 
							$o^*, o_0,\ldots, o_{n-2}, o_{n-1}$. The preference of each agent $i\in N\setminus \{1\}$ over the objects $O\setminus \{o^*\}$ are obtained as follows: take agent $i-1$ preference order over $O\setminus \{o^*\}$ and move the most preferred object of $i-1$ among $O\setminus \{o^*\}$ to the end of the preference order for agent $i$. 
							\vspace{-1em}
\begin{align*}
	\scriptsize
								1:&\quad o^*, o_0,\quad,\ldots, o_{n-2}, \quad\quad o_{n-1}\\
								2:&\quad o^*, o_1,\quad,\ldots, o_{n-1}, \quad\quad o_0\\ 
								\vdots\\
								i:&\quad o^*, o_{i-1},\quad,\ldots, o_{n-i+1},\quad o_{i-2}
							\end{align*} 
							
							By Lemma \ref{lem:anonym}, we can assume without loss of generality that $J$ is anonymous. Furthermore, since $J$ is ordinal, due to the preference profile, the mechanism cannot differentiate among the agents even though they may have different valuations over the objects.  
							Assume that there is some agent that is allocated at most $1/n$ of the universally most preferred object $o^*$. In this case case, consider the scenario where this agent has utility almost $1$ for $o^*$ and the other agents $i$ have utility $0.5+\epsilon$ for $o^*$ and utility $0.5-\epsilon$ for $o_{i-1}$ 
							where $\epsilon$ is an arbitrarily small positive value.. In this case, the egalitarian value achieved is $1/n$ whereas the OEV is almost $0.5$. Hence $\guar(J)=O(1/n)$.
						\end{proof}
						
						\noindent Since MPS is an ordinal mechanism, it follows that $\guar(MPS)= O(n^{-1})$. 
	
\section{Experimental Results}
The results in Section~\ref{sec:theory} give us worst case bounds on the \emph{guaranteed approximation ratios ($\guar(J)$)} for a number of prominent randomized mechanisms including RSD and PS.  
Hence, in this section we present experimental results which provide a perspective on what may happen ``in practice.'' Since PS can be considered as the most efficient SD envy-free mechanism (in view of various characterizations~\citep{BoHe12a,UKK+13a}), the results for PS can also be viewed as the effect of enforcing SD envy-freeness. 
In order to test the quality of RSD and PS we need to generate both preferences and cardinal utilities for the agents.  There are a number of generative statistical cultures that are commonly used to generate ordinal preferences over objects and the choice of model can have significant impact on the outcome of an experimental study (see e.g.~\cite{PRM13a}).

Since our focus is fairness, and fairness is often hard to achieve when agents have similar valuations, we employ the \emph{Mallows model} \cite{Mall57a} and use the generator from \textsc{www.preflib.org} \cite{MaWa13a} in our study.  Mallows models are often used in machine learning and preference handling as they allow us to easily control the correlation between the preferences of the agents; a common phenomena in preference data \cite{MaWa13a,Matt11a,LuBo11a}. A Mallows model has two parameters:
\begin{inparaenum}[(1)]
	\item a \textbf{Reference Order ($\sigma$)}, the preference order at the center of the distribution, and 
\item	a \textbf{Dispersion Parameter ($\phi$)},  the variance in the distribution which controls the level of similarity of the agent preference orders. When $\phi=0$ all agents have the same ordinal preference; when $\phi=1$ then the ordinal preferences are drawn uniformly at random from the space of all preference orders.
	\end{inparaenum}


Formally, the probability of observing an ordering $r$ is inversely proportional to the Kendall Tau distance between $\sigma$ and $r$.  This probability is weighted by $\phi$, which allows us to control the shape of the distribution. 
For a given ordinal preference, we superimpose cardinal utilities for the agents using two well-established scoring functions:
	\begin{inparaenum}[(1)]
		\item \textbf{Borda Utilities}, each agent has valuation of $m-i$ for his $i$-th preferred object, and 
		\item \textbf{Exponential Utilities}, each agent has valuation of $2^{m-i}$ for his $i$-th preferred object.
	\end{inparaenum}

	
In our experiments we generate 10,000 valuation profiles (instances) for each combination of parameters with the number of agents $n \in \{2, \ldots, 9\}$, number of objects $m \in \{2, \ldots, 9\}$, and dispersion parameter $\phi \in \{0.0, 0.1, \ldots, 1.0\}$.  We draw $\sigma$ i.i.d. for each instance.
\vspace{-1em}
	\subsection{Experiments: The Performance of RSD and PS}
For each instance $v$ generated, and each mechanism $J$, we examined
the achieved approximation ratio, $\aar(J,v) = \frac{EV(J(v),v)}{OEV(v)}$, of the RSD and PS mechanisms. Among all such values computed, we examined the minimum and average ratio achieved for a given set of parameters.  The results of our experiments for Borda Utilities are shown in Figure~\ref{fig:min-avg-borda} while the results for exponential utilities are shown in Figure~\ref{fig:min-avg-exp}.  
All our figures are aggregations over all values of $m$ for particular combinations of $n$ and $\phi$.  Note that since $aar(J,v)$ is normalized over the total utility we can aggregate these terms as it is invariant to this scaling.  This allows us to draw more general conclusions as we range over the number of agents and objects.
Empirically we found that increasing the number of agents has a greater impact on the approximation performance of the mechanisms compared to an increase in the number of objects, hence the decision to aggregate the graphs in the manner chosen.  This empirical result is in line with our theoretical results showing that the worst case approximation ratio is a function of $n$. The results for both mechanisms in Figure~\ref{fig:min-avg-borda} strictly dominate the results in Figure~\ref{fig:min-avg-exp}.  Hence, we observe that the achieved approximation ratio is better for Borda utilities than for exponential utilities.  

\begin{figure}[ht!]
		\minipage[t][][b]{0.45\textwidth}
		\centering
		 \footnotesize{RSD Min. Achieved Approx. Ratio \\ Borda Utilities}
		  \includegraphics[scale=0.20]{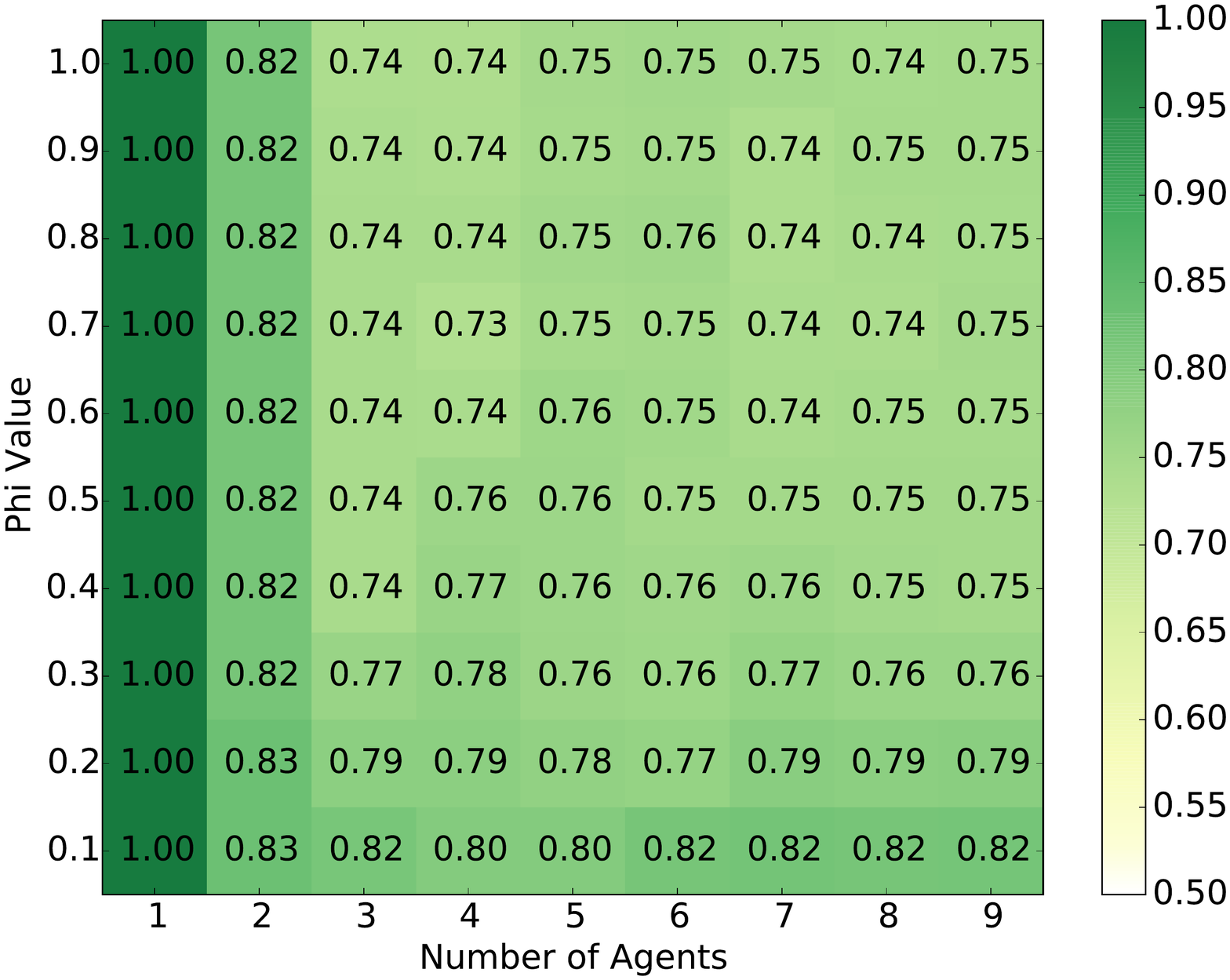}
		\endminipage
	\hfill
		\minipage[t][][b]{0.45\textwidth}
		\centering
		\footnotesize{PS Min. Achieved Approx. Ratio \\ Borda Utilities}
  		\includegraphics[scale=0.20]{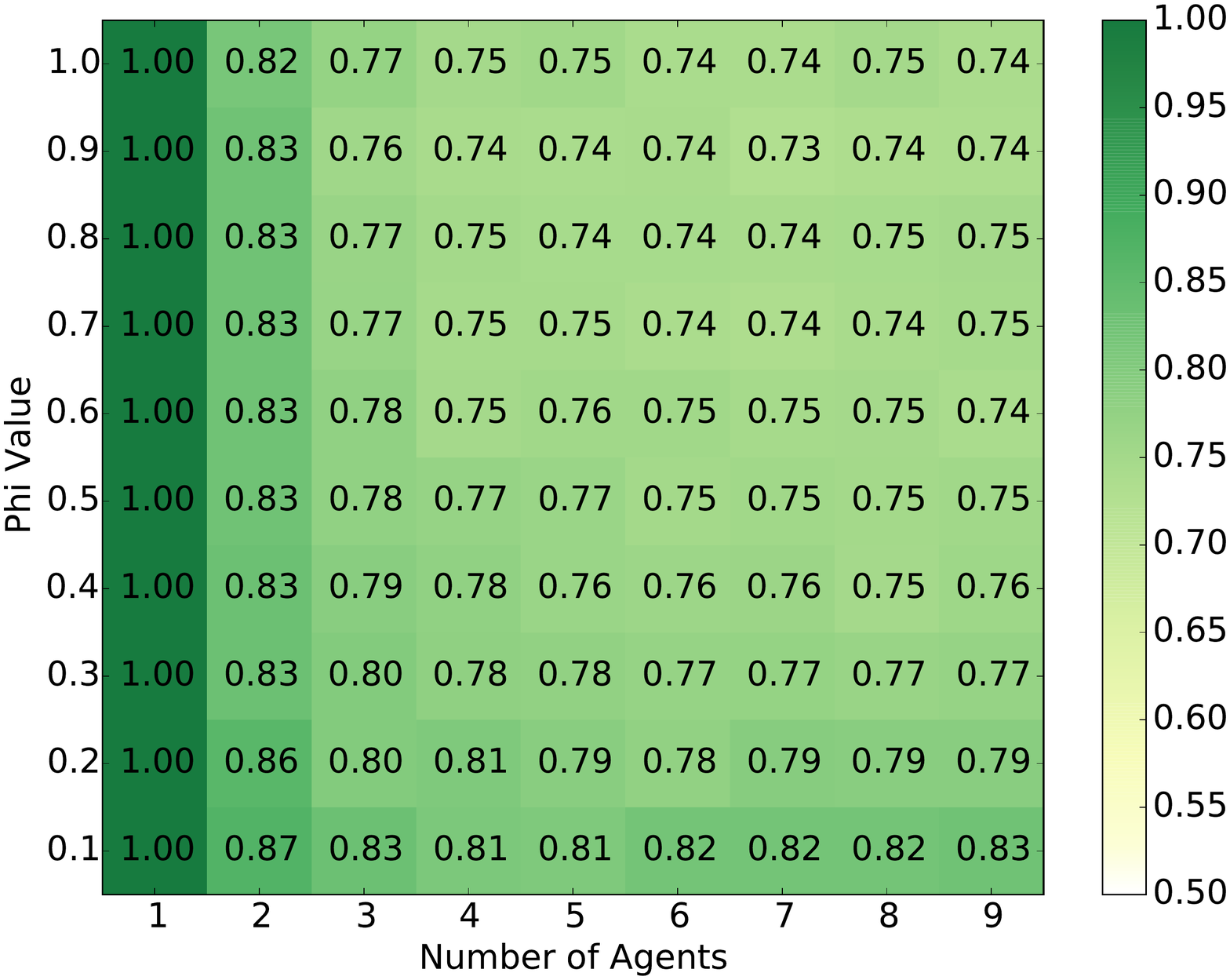}
		\endminipage
		
		\vspace{0.5cm}

		\minipage[t][][b]{0.45\textwidth}
		\centering
		\footnotesize{RSD Mean Achieved Approx. Ratio \\ Borda Utilities}
	 	\includegraphics[scale=0.20]{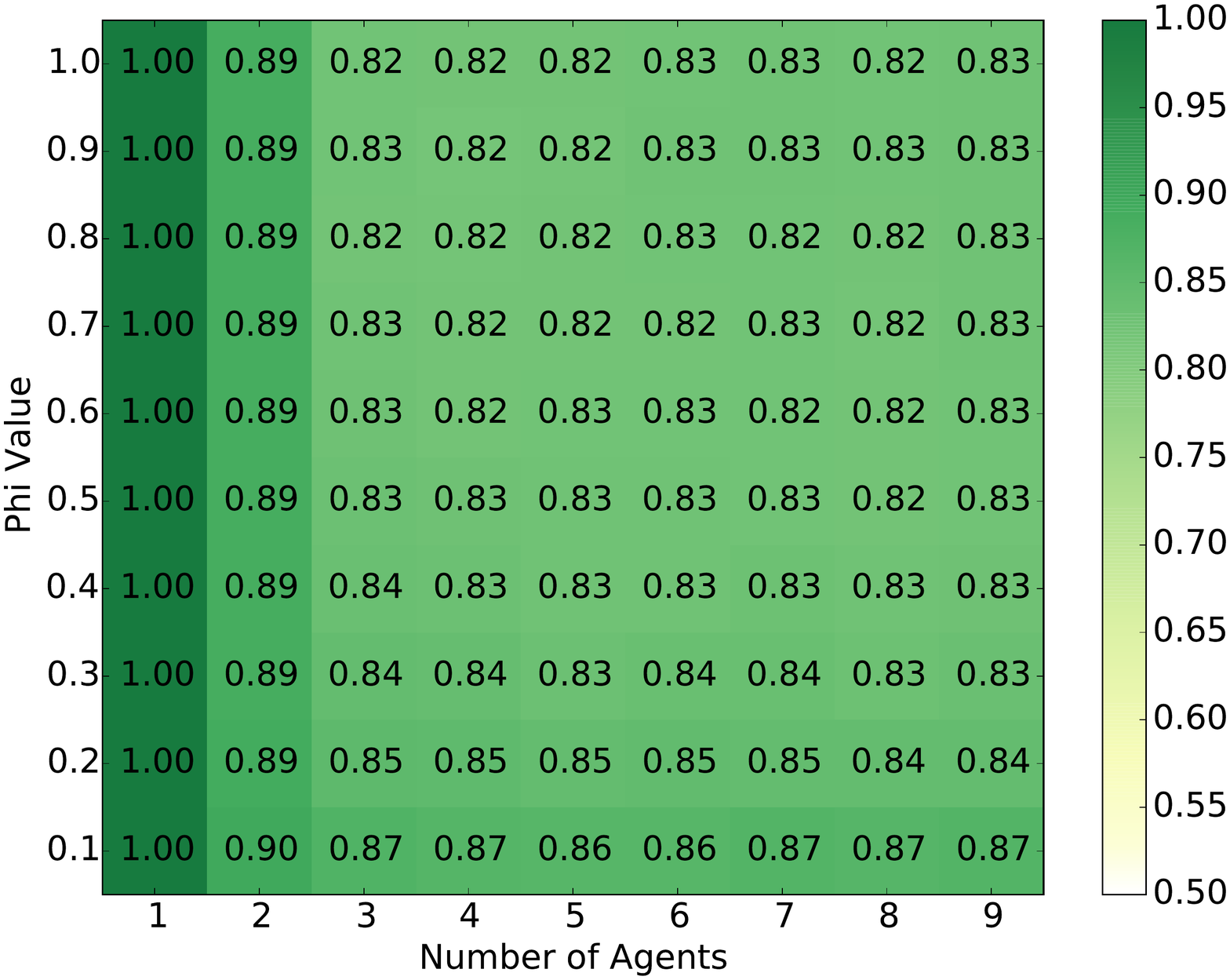}
		\endminipage
	\hfill
		\minipage[t][][b]{0.45\textwidth}
		\centering
		\footnotesize{PS Mean Achieved Approx. Ratio \\ Borda Utilities}
  		\includegraphics[scale=0.20]{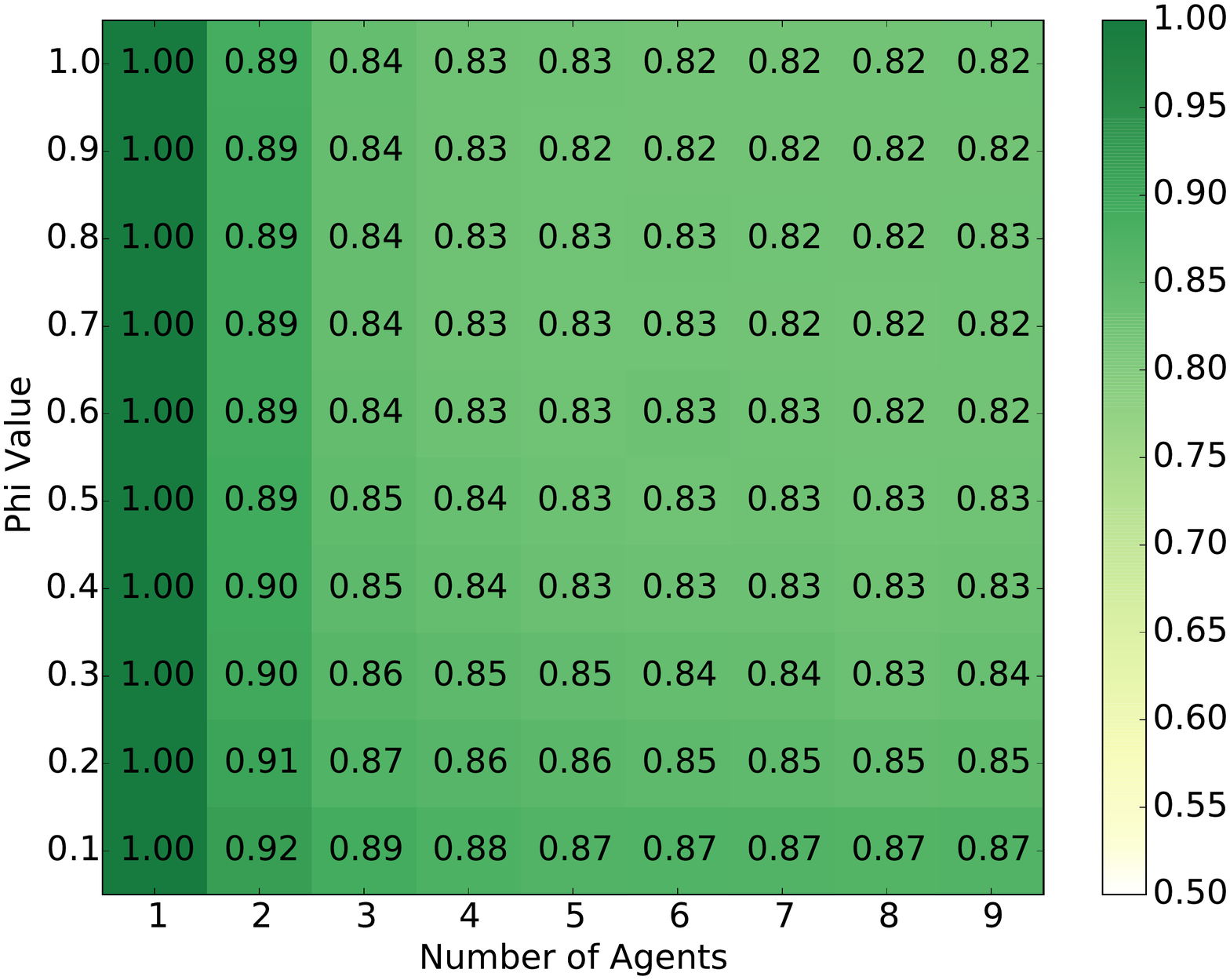}
		\endminipage
	
	\caption{Minimum (top) and average (bottom) achieved approximation ratio for the RSD (left) and PS (right) 
	mechanisms with Borda utilities.
	Observe that both mechanisms preform similarly and significantly better than the derived $\guar(J)$.  Both
	mechanisms are relatively invariant to the level of dispersion in the underlying valuation profiles.
	For each $n=\{1, \ldots, 9\}$ the graphs are aggregated
	over the complete range of objects.  For example, the cell $(n=4, \phi=0.2)$ is the minimum (resp. average) 
	achieved approximation ratio over all instances with $m \in \{1, \ldots, 9\}$.}
	\label{fig:min-avg-borda}
	\end{figure}

When $\phi=0.0$ (not shown in our graphs), the achieved approximation ratio is $1$ for both PS and RSD.  Both of these mechanisms return the uniform allocation, assigning probability of $\nicefrac{1}{n}$ for each object to each agent, when all the agents have identical preferences.  In general, sweeping the value of $\phi$ from completely correlated to completely uncorrelated preferences has little impact on the overall achievable approximation ratio, though for both models the achievable approximation ratio did strictly decrease as we increased $\phi$.  Comparing Figures~\ref{fig:min-avg-borda} and \ref{fig:min-avg-exp}, the impact of changing $\phi$ was strictly greater for the exponential utility model than it was for the Borda utility model, highlighting again that, as the difference between the valuations of the objects grows large, it becomes harder to achieve fair allocations.

\begin{figure}[ht!]
		\minipage[t][][b]{0.45\textwidth}
		\centering
		 \footnotesize{RSD Min. Achieved Approx. Ratio \\ Exponential Utilities}
		  \includegraphics[scale=0.20]{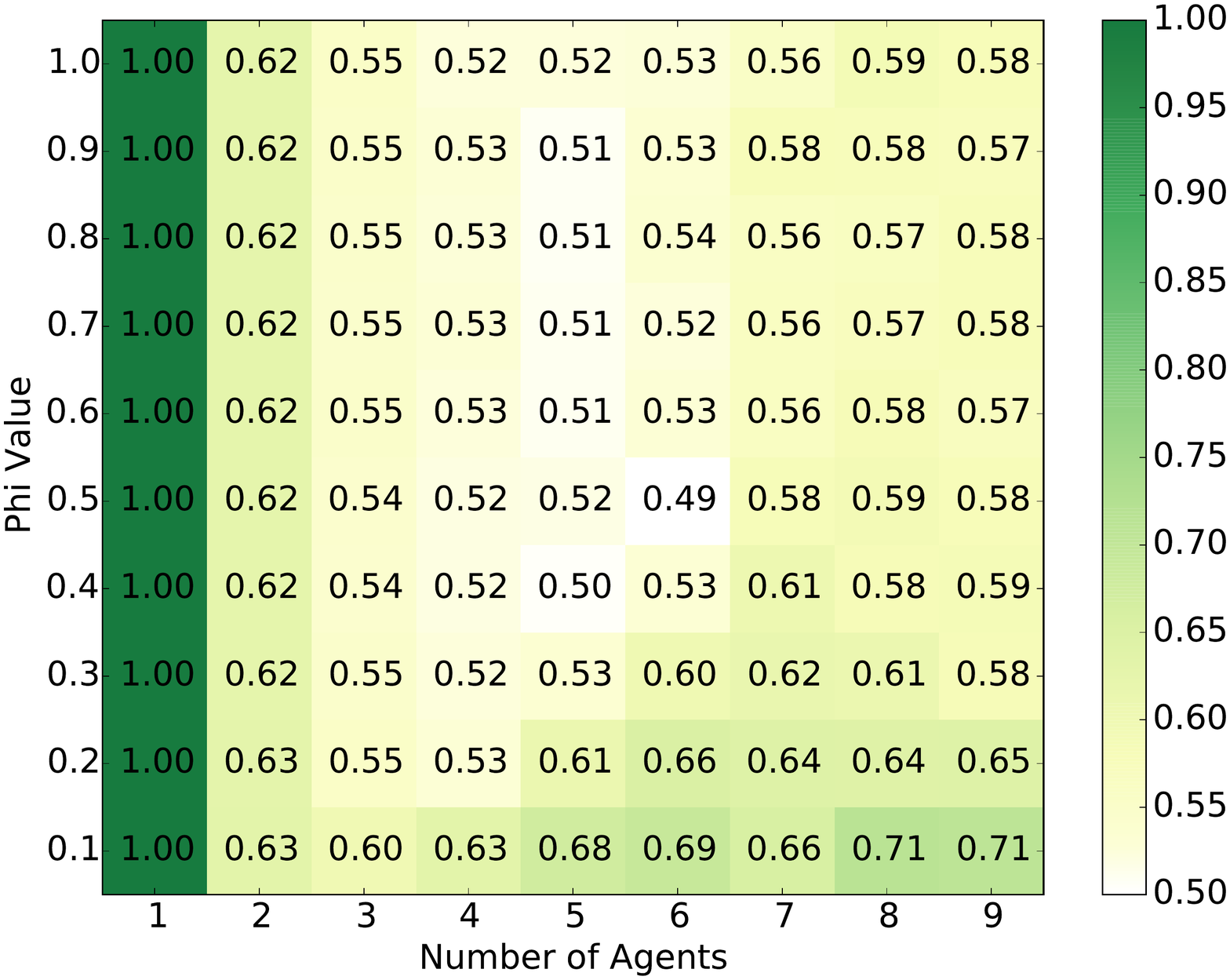}
		\endminipage
	\hfill
		\minipage[t][][b]{0.45\textwidth}
		\centering
		\footnotesize{PS Min. Achieved Approx. Ratio \\ Exponential Utilities}
  		\includegraphics[scale=0.20]{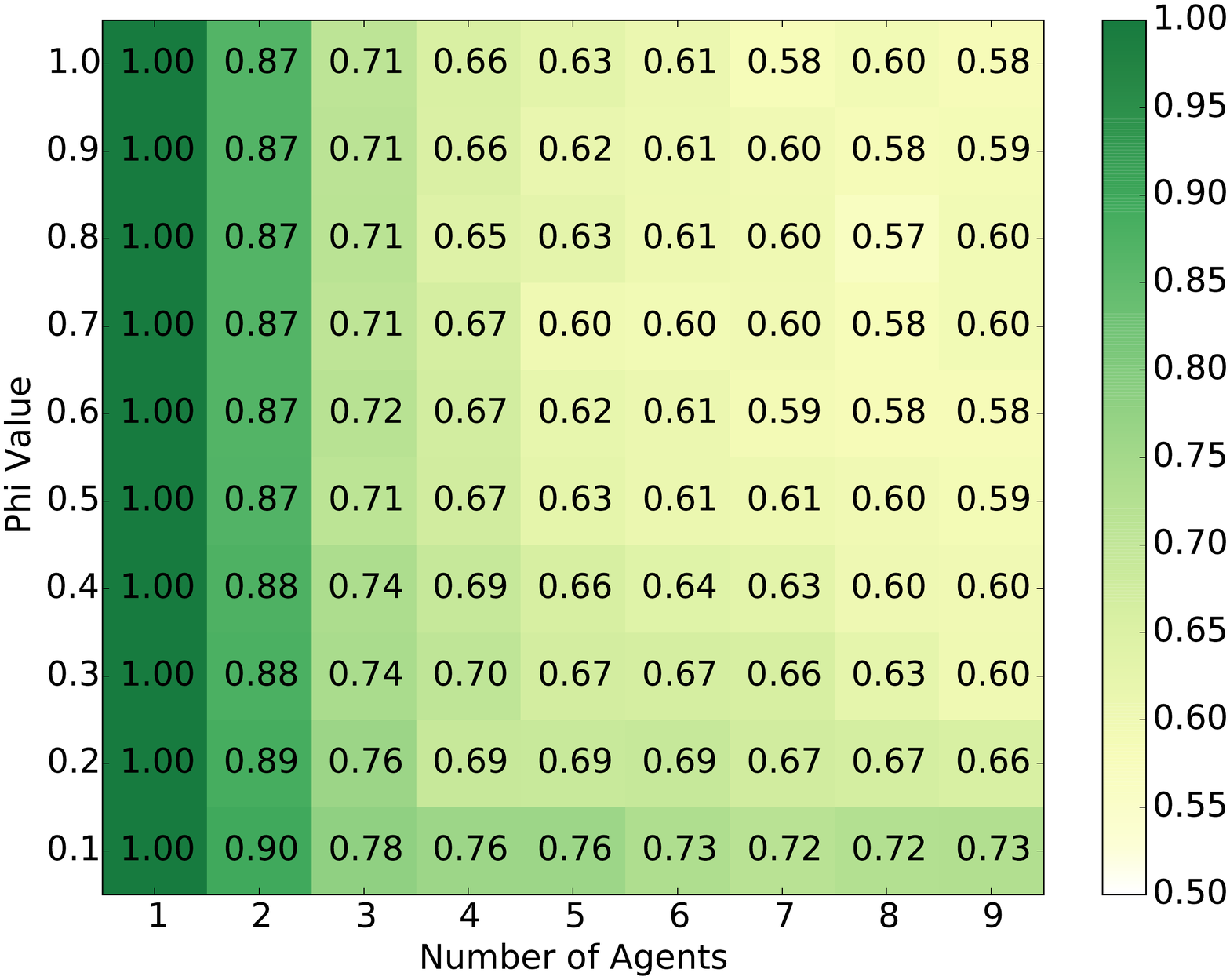}
		\endminipage
		
		\vspace{0.5cm}

		\minipage[t][][b]{0.45\textwidth}
		\centering
		\footnotesize{RSD Mean Achieved Approx. Ratio \\ Exponential Utilities}
	 	\includegraphics[scale=0.20]{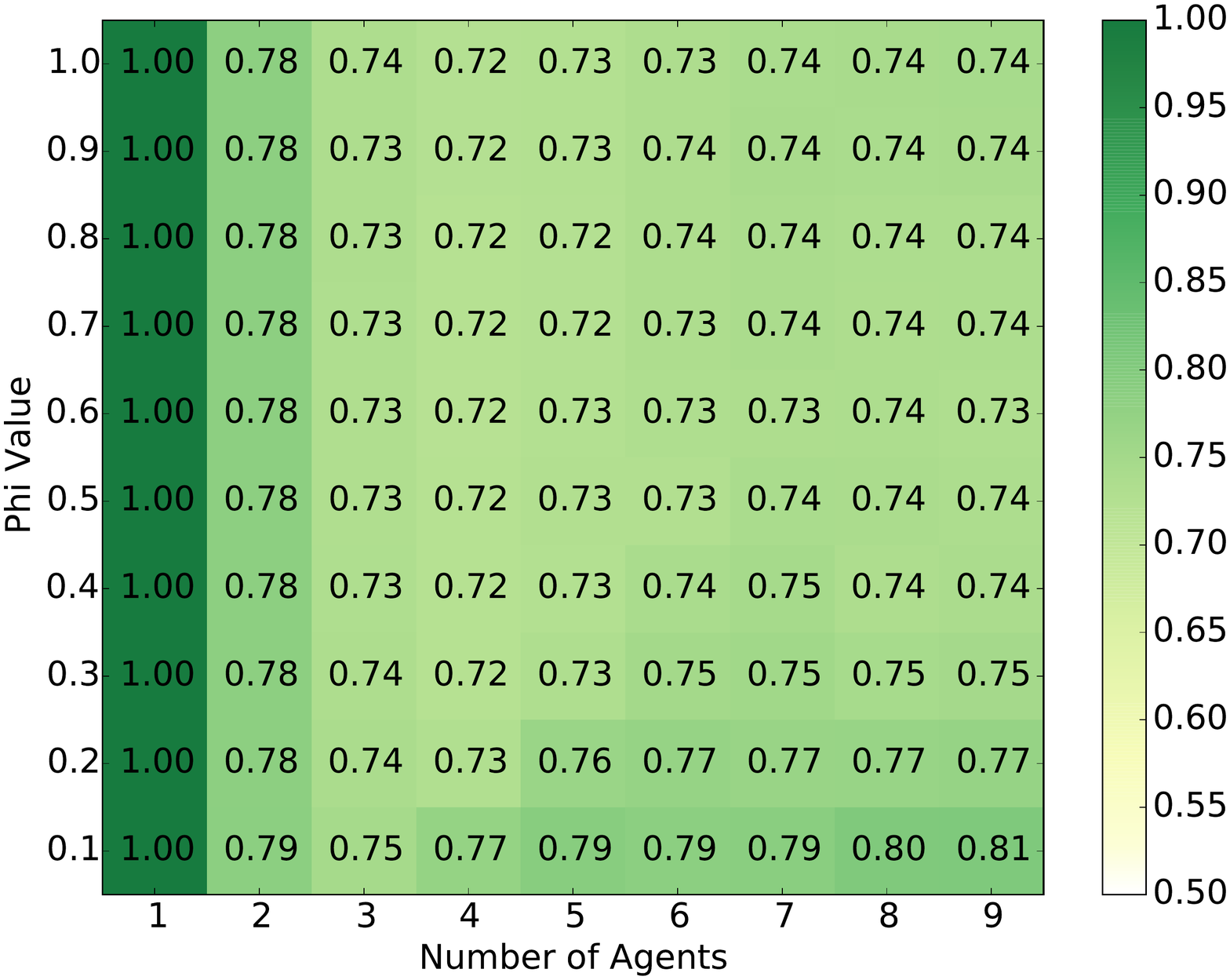}
		\endminipage
	\hfill
		\minipage[t][][b]{0.45\textwidth}
		\centering
		\footnotesize{PS Mean Achieved Approx. Ratio \\ Exponential Utilities}
  		\includegraphics[scale=0.20]{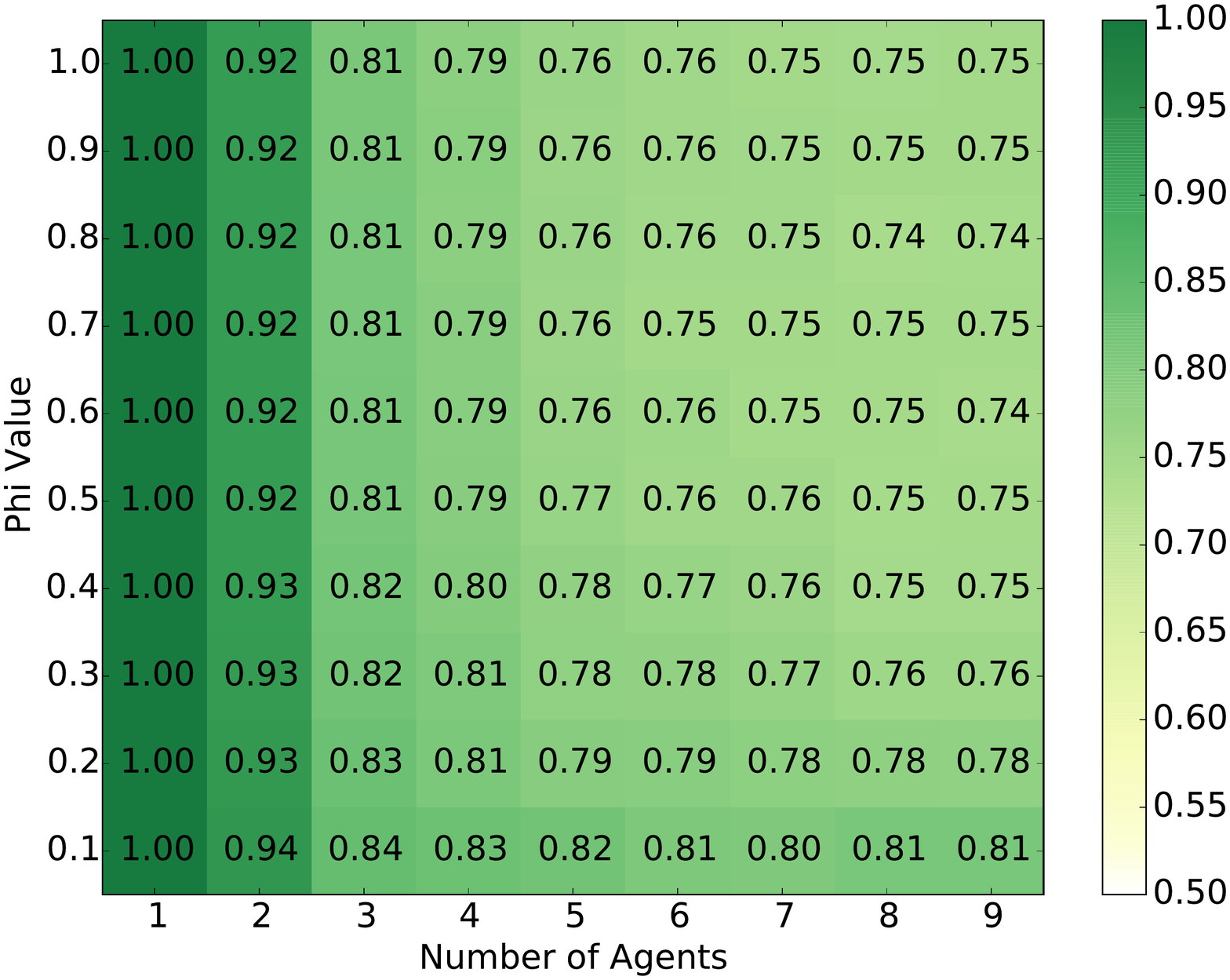}
		\endminipage
	
	\caption{Minimum (top) and average (bottom) achieved approximation ratio for the for the 
	RSD (left) and PS (right) mechanism with exponential utilities.
	These results are strictly dominated by those in Figure~\ref{fig:min-avg-borda}; implying that as 
	difference between the valuations of the objects grows the achieved approximation ratio decreases.	
	For each $n=\{1, \ldots, 9\}$ these graphs are aggregated
	over the complete range of objects.  For example, the cell $(n=4, \phi=0.2)$ is the minimum (resp. average) 
	approximation ratio achieved over all instances with $m \in \{1, \ldots, 9\}$.}
	\label{fig:min-avg-exp}
\end{figure}	

Interestingly, in Figure~\ref{fig:min-avg-borda} there appears to be almost no difference between the minimum and average ratios for PS and RSD under Borda utilities.  Furthermore, these ratios appear to be very high compared to our theoretical results.  
Finally, PS consistently performs slightly better than RSD for the minimum and average ratios under exponential utilities and on par with RSD for Borda utilities.  This provides more empirical support to the argument that PS is superior to RSD in terms of fairness.
\vspace{-1em}	
\subsection{Experiments: The Effect of Envy-freeness}

In order to evaluate the effect that envy-freeness has on the allocations we turn to the OEEF mechanism. To understand the worst case effects of adding envy-freeness as a hard constraint has on small instances we exhaustively tested the parameter space with agents $n \in \{2, \ldots, 6\}$, number of objects $m \in \{3, \ldots, 4\}$ under Borda and exponential utilities.  In this entire parameter space, the worse case achievable approximation ratio was $0.87$, significantly higher than the theoretical worst case.  This shows that for smaller instances and some standard utility models, the requirement of envy-freeness does not have a significant negative impact on the achievable approximation ratio.
To get an understanding of the performance of OEEF in a larger parameter space we repeated the experiments from the previous section here, evaluating the performance of OEEF across a large parameter space with number of agents $n \in \{2, \ldots, 9\}$, number of objects $m \in \{2, \ldots, 9\}$, and dispersion parameter $\phi \in \{0.0, 0.1, \ldots, 1.0\}$.  The results of these tests, again aggregated by $m$ and $\phi$, are shown in Figure~\ref{fig:envy-free}.

\begin{figure}[ht]
	\minipage[t][][b]{0.45\textwidth}
	\centering
	\footnotesize{Min. Achieved Approx. Ratio \\ Envy-free and Borda Utilities}
	\includegraphics[scale=0.20]{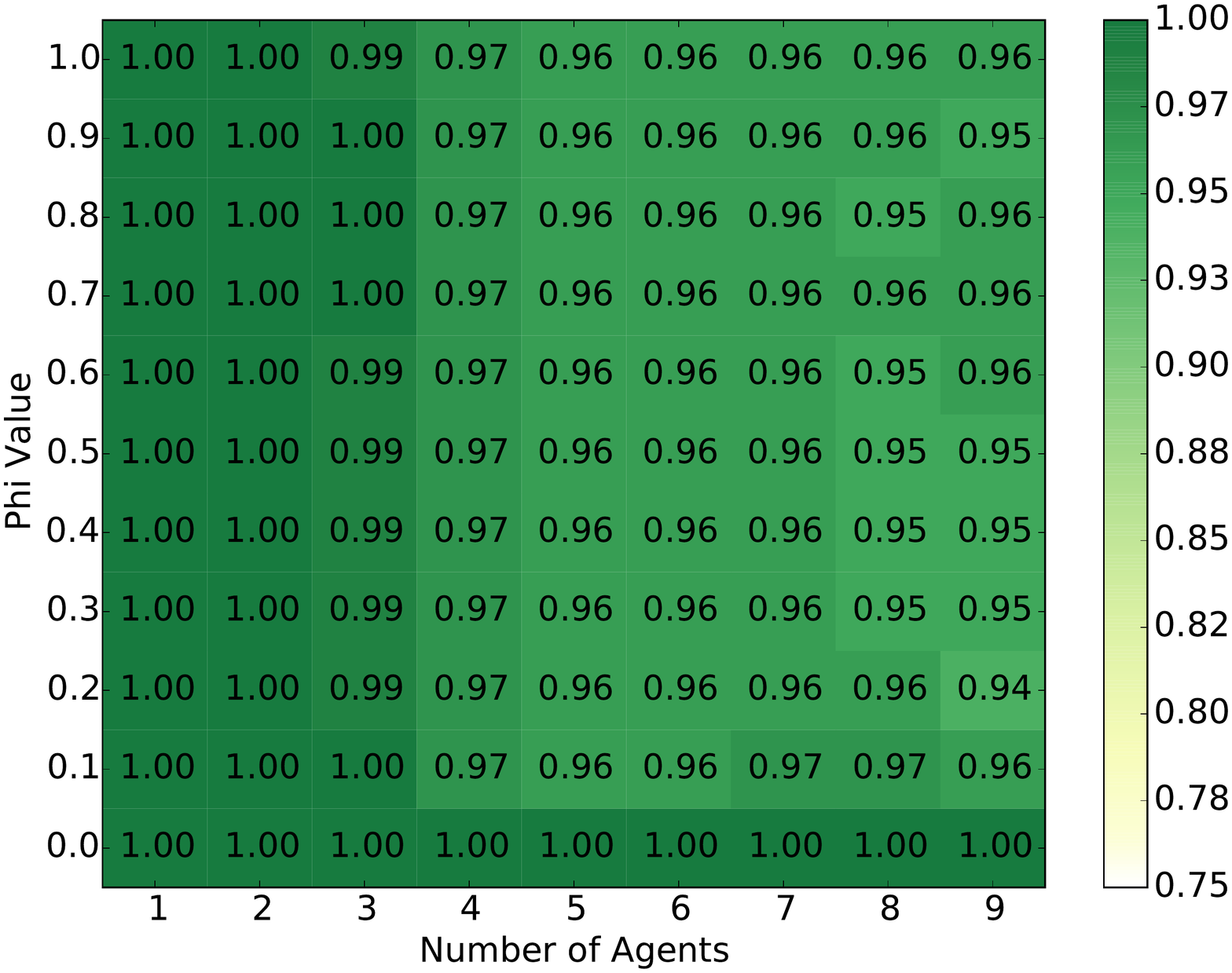}
	\endminipage 
	\hfill
	\minipage[t][][b]{0.45\textwidth}
	\centering
	\footnotesize{Min. Achieved Approx. Ratio \\ Envy-free and Exponential Utilities}
	\includegraphics[scale=0.20]{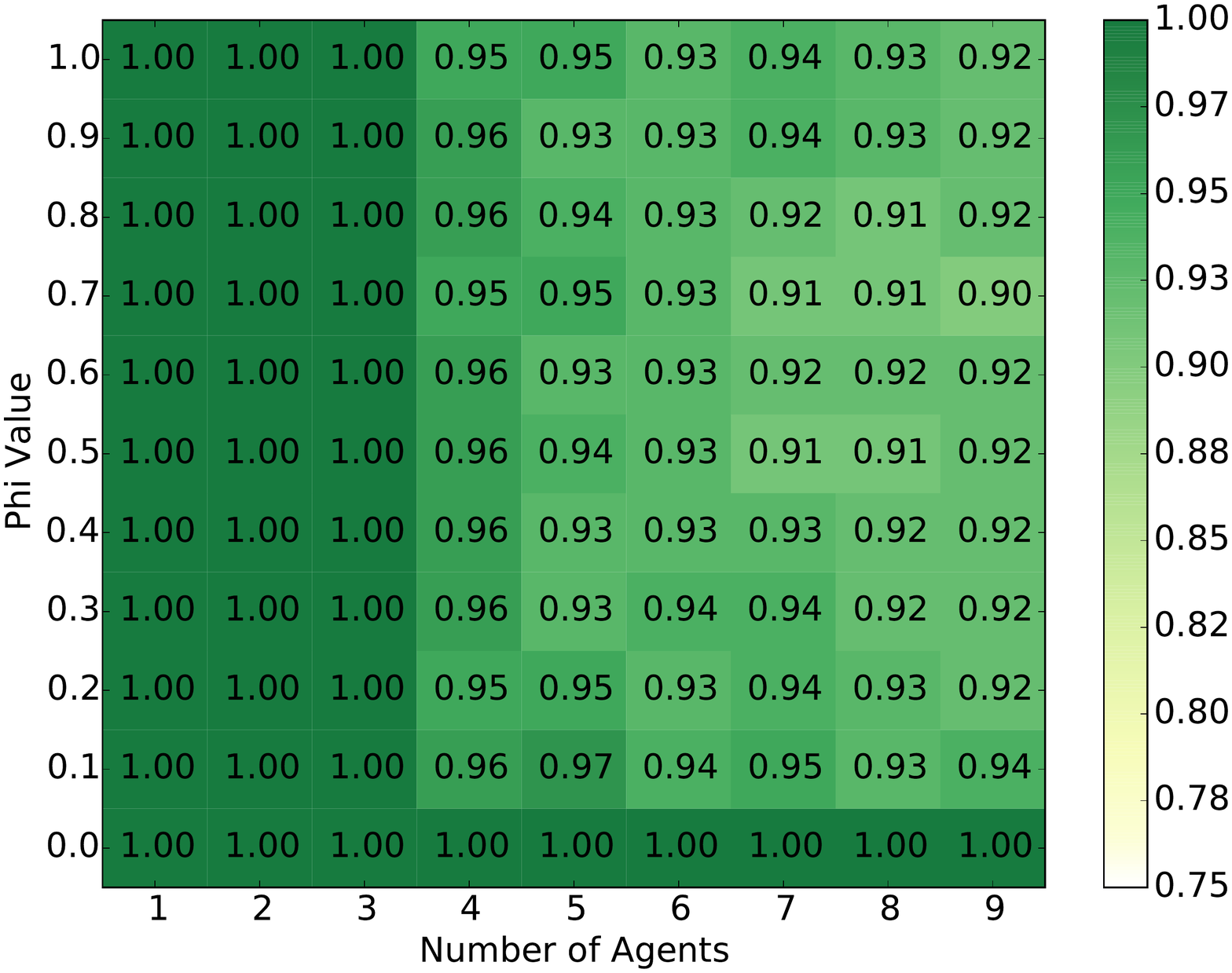}
	\endminipage 
	\caption{Minimum achieved approximation ratio when we enforce envy-freeness as a hard constraint.
	Contrasting these results with those from Figures~\ref{fig:min-avg-borda} and \ref{fig:min-avg-exp} for the PS mechanism demonstrates 
	that the stronger notion of envy-freeness that is satisfied by PS has a significant impact on the achieved
	approximation ratio while the more common notion of envy-freeness does not.
	For each $n=\{1, \ldots, 9\}$ these graphs are aggregated
	over the complete range of objects.  For example, the cell $(n=4, \phi=0.2)$ is the minimum (resp. average) 
	approximation ratio achieved over all instances with $m \in \{1, \ldots, 9\}$.}
	\label{fig:envy-free}
\end{figure}

	%
When agents have exponential utilities, the achieved approximation ratio, much like in the last section, is strictly worse.  Additionally, when we have exponential utilities, as $\phi$ increases, the approximation ratio for the envy-free mechanisms first decreases slightly and then increases for higher number of agents.  Since $\phi=1.0$ means that agents preferences are drawn uniformly at random, it is more likely that each agent has high valuation for different objects.  Hence, as the preferences move from concentrated to dispersed, there seems to be an interesting transition from high to low and back to high in terms of the achievable approximation ratio.
As in the previous subsection, we observe that when all agents have the same preferences, the uniform allocation is both envy-free and has maximal achieved approximation ratio.  Hence, when $\phi=0$, the ratio is $1.0$ (not shown in Figure~\ref{fig:envy-free}).  We note that RSD preforms much more poorly across the board compared to OEEF.  
The results in Figure~\ref{fig:envy-free} strictly dominate the results for PS in Figures~\ref{fig:min-avg-borda} and \ref{fig:min-avg-exp}.  Hence, SD envy-freeness that is satisfied by PS has a significant impact on the achieved approximation ratio. 
	
%

\vspace{-1em}
\section{Conclusion}
We present theoretical and experimental results concerning how well different randomized mechanisms approximate the optimal egalitarian value. It has been well-known that egalitarianism can be incompatible with envy-free or truthfulness. In this paper, we quantified how much egalitarianism is affected by such properties. 
In a recent paper, \citet{CFK+15a} proved results for the utilitarian welfare of the Nash equilibria of assignment mechanisms. 
It will be interesting to adopt a similar approach with respect to the egalitarian value and study the price of anarchy of randomized mechanisms with respect to that objective.
We assume \emph{additive} cardinal utilities in this paper, it would be interesting to consider other assumptions.
To conclude, we mention an open problem: what is the \emph{best} OEV approximation guaranteed by truthful mechanisms? 


\small
\vspace{-1em}
\section*{Acknowledgments}

NICTA is funded by the Australian Government through the Department of Communications and the Australian Research Council through the ICT Centre of Excellence Program. Aris Filos-Ratsikas was supported by the Sino-Danish Center for the Theory of Interactive Computation, funded by the Danish National Research Foundation and the National Science Foundation of China (under the grant 61061130540), and by the Center for research in the Foundations of Electronic Markets (CFEM), supported by the Danish Strategic Research Council.

 \vspace{-1em}


\end{document}